\DeclareRobustCommand{\rchi}{{\mathpalette\irchi\relax}}
\newcommand{\irchi}[2]{\raisebox{\depth}{$#1\chi$}} 
\global\long\def\RR{\mathbb{R}}
\global\long\def\11{\mathbbm{1}}
\global\long\def\+{\oplus}
\newcommand\pmm{\{-1,1\}}
\def\<{\langle}
\def\>{\rangle}
\DeclareMathOperator*{\argmin}{arg\,min}
\DeclareMathOperator*{\argmax}{arg\,max}
\def\deq{\mathrel{\ensurestackMath{\stackon[1pt]{=}{\scriptstyle\Delta}}}}
\newcommand\numberthis{\addtocounter{equation}{1}\tag{\theequation}}
\def\fS{{f}_{\mathcal{S}}}
\def\ps{\phi_\mathcal{S}}
\def\pS{\ps}
\newcommand{\pset}[1]{{\phi}_{#1}}
\def\L2{\mathcal{L}^2(\mathcal{X}, P_{X^d})}
\def\E_mu{\mathcal{E}_{\mu}(\epsilon')}
\definecolor{green}{rgb}{0.6627,0.8196,0.5568}
\definecolor{blue}{rgb}{0.6875,0.8750,0.8984}
\definecolor{purple}{rgb}{ 0.7647,    0.6078,    0.8824}
\newcommand{\discardpages}[1]{
  \xdef\discard@pages{#1}
  \AtBeginShipout{
    \renewcommand*{\do}[1]{
      \ifnum\value{page}=##1\relax%
        \AtBeginShipoutDiscard
        \gdef\do####1{}
      \fi%
    }%
    \expandafter\docsvlist\expandafter{\discard@pages}
  }%
}
\newif\ifkeeppage
\newcommand{\keeppages}[1]{
  \xdef\keep@pages{#1}
  \AtBeginShipout{
    \keeppagefalse%
    \renewcommand*{\do}[1]{
      \ifnum\value{page}=##1\relax%
        \keeppagetrue
        \gdef\do####1{}
      \fi%
    }%
    \expandafter\docsvlist\expandafter{\keep@pages}
    \ifkeeppage\else\AtBeginShipoutDiscard\fi
  }%
}
\newtheorem{Theorem}{Theorem}
\newtheorem{Lemma}{Lemma}
\newtheorem{Remark}{Remark}
\newtheorem{Definition}{Definition}
\definecolor{darkblue}{rgb}{0.1,0.1,0.8}
\definecolor{DarkGreen}{rgb}{0,0.6,0}
\definecolor{brickred}{rgb}{0.8, 0.25, 0.33}
\definecolor{britishracinggreen}{rgb}{0.0, 0.26, 0.15}
\definecolor{calpolypomonagreen}{rgb}{0.12, 0.3, 0.17}
\definecolor{ao(english)}{rgb}{0.0, 0.5, 0.0}
\definecolor{cadmiumgreen}{rgb}{0.0, 0.42, 0.24}
\definecolor{burgundy}{rgb}{0.5, 0.0, 0.13}
\begin{document}
%


\title{On Non-Interactive Source Simulation via Fourier Transform}

\author{

\IEEEauthorblockN{ Farhad Shirani$^\dagger$\thanks{This work was supported in part by NSF grants CCF-2241057 and CCF-2211423.}, Mohsen Heidari$^\ddagger$}
\IEEEauthorblockA{$^\dagger$Florida International University, $^\ddagger$ Indiana University, Bloomington
\\Email: fshirani@fiu.edu, mheidar@iu.edu}
}

\maketitle

 \begin{abstract}
The non-interactive source simulation (NISS)  scenario is considered. In this scenario, a pair of distributed agents, Alice and Bob, observe a distributed binary memoryless source $(X^d,Y^d)$ generated based on joint distribution $P_{X,Y}$. The agents wish to produce a pair of discrete random variables $(U_d,V_d)$ with joint distribution $P_{U_d,V_d}$, such that $P_{U_d,V_d}$ converges in total variation distance to a target distribution $Q_{U,V}$ as the input blocklength $d$ is taken to be asymptotically large.
Inner and outer bounds are obtained on the set of distributions $Q_{U,V}$ which can be produced given an input distribution $P_{X,Y}$. To this end, a bijective mapping from the set of distributions $Q_{U,V}$ to a union of star-convex sets is provided. By leveraging proof techniques from discrete Fourier analysis along with a novel randomized rounding technique, inner and outer bounds are derived for each of these star-convex sets, and by inverting the aforementioned bijective mapping, necessary and sufficient conditions on $Q_{U,V}$ and $P_{X,Y}$ are provided under which $Q_{U,V}$ can be produced from $P_{X,Y}$. The bounds are applicable in NISS scenarios where the output alphabets $\mathcal{U}$ and $\mathcal{V}$ have arbitrary finite size. In case of binary output alphabets, the outer-bound recovers the previously best-known outer-bound. 
\end{abstract}


%
\IEEEpeerreviewmaketitle
\vspace{-.06in}
\section{Introduction}

Non-interactive source simulation (NISS), shown in Figure \ref{fig:1}, refers to a scenario where two distributed agents, Alice and Bob,  each observe a sequence of random variables,  $X^d$ and $Y^d$, respectively, where $d\in \mathbb{N}$. The agents wish to produce $(U_d,V_d)\sim P_{U_d,V_d}$ via (possibly stochastic) functions $U_d=f_d(X^d)$ and $V_d=g_d(Y^d)$, such that the joint distribution $P_{U_d,V_d}$ is {close} to a target distribution $Q_{U,V}$ in total variation. The target distribution $Q_{U,V}$ is said to be feasible for an input distribution $P_{X,Y}$ if
\begin{equation}\label{eq:NISS}
\lim_{d\rightarrow \infty } \quad \inf_{\substack{(U_d,V_d)\sim P_{U_d,V_d}\\U_d-X^d-Y^d-V_d}}d_{TV}(P_{U_d,V_d},
Q_{U,V})=0.
\end{equation}

The NISS scenario arises naturally in various applications such as cryptography \cite{ahlswede1993common}, covert communications \cite{maurer1993secret}, game-theoretic coordination in adversarial networks \cite{anantharam2007common,kamath2016non}, quantum computing \cite{nielsen1999conditions},  and correlation distillation \cite{bogdanov2011extracting,li2020boolean}. 


Prior works have studied analytical characterization of the set of feasible distributions, and constructive algorithms for producing the associated variables under various assumptions on input distributions and output alphabets.   
Witsenhausen  studied the problem in \cite{witsenhausen1975sequences}, under the assumption that $(X^d, Y^d)$ are  independent and identically distributed (IID)  and that $(U,V)$ are jointly Gaussian and derived necessary and sufficient conditions on their correlation coefficient under which $Q_{U,V}$ is feasible for $P_{X,Y}$. An algorithm for constructing the corresponding mapping pair $(f(\cdot),g(\cdot))$ was provided, and it was shown that for jointly Gaussian target distributions, any correlation less than the Hirschfeld-Gebelein-R\'enyi maximal correlation coefficient \cite{hirschfeld1935connection,gebelein1941statistische,renyi1959measures} can be produced at the distributed terminals. There has been significant recent progress in the study of the  discrete version of the problem. In \cite{kamath2016non},
hypercontractivity techniques were used to provide necessary conditions on the distribution $Q_{U,V}$, i.e., impossibility results for NISS. In \cite{ghazi2016decidability}, NISS problems with binary outputs were shown to be decidable. That is, it was shown that given distributions $P_{X,Y}$ and $Q_{U,V}$, there exists a Turing machine that can decide in finite time whether $Q_{U,V}$ is feasible for a distribution $P_{X,Y}$. These decidability
results were further extended to general finite output alphabets in \cite{de2018non}. In \cite{yu2021non}, impossibility results were derived for the binary-input, binary-output case which improve upon the hypercontractivity-based bounds in \cite{kamath2016non}. These derivations rely on the Fourier expansion of the functions $(f(\cdot),g(\cdot))$ over the Boolean cube. 
The NISS problem is closely related to other two-agent problems including those considered by G\'acs and K\"orner \cite{gacs1973common}, and Wyner \cite{wyner1975common}. A comprehensive survey of relevant problems and their connections to NISS is given in \cite{yu2022common}. 

 \begin{figure}[t]
\centering 
\includegraphics[width=0.5\textwidth]{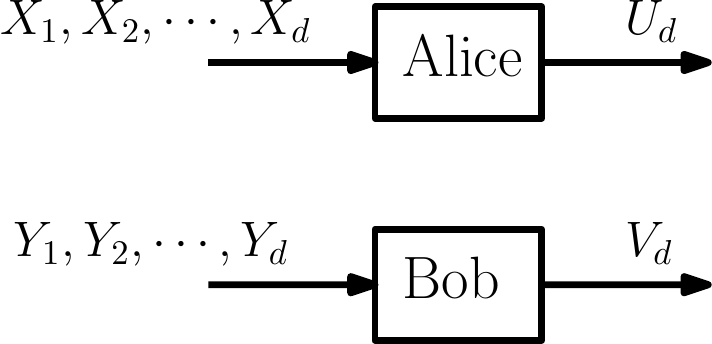}
\caption{The non-interactive source simulation problem. 
}
\vspace{-.25in}
\label{fig:1}
\end{figure}

In this paper, we consider the binary-input NISS problem, where the distributed binary source $(X^d,Y^d)$ is memoryless, and the output variables $(U,V)$ have general finite alphabets. We show that the set of feasible target distributions $Q_{U,V}$ can be bijectively mapped to a union of star convex sets, which can be evaluated by characterizing their extreme points and supporting hyperplanes. We find inner-bounds and outer-bounds for the set of feasible target distributions by evaluating these supporting hyperplanes. The outer bounds generalize those given in \cite{yu2021non} and the inner-bounds provide new achievability results for the NISS problem. The evaluation techniques rely on the Fourier expansion on the Boolean cube \cite{Wolf2008,o2014analysis} and its application in quantifying correlation among distributed functions \cite{mossel2004learning,courtade2014boolean,shirani2017correlation,shirani2019sub,
Heidari2019,yu2021non,ghazi2016decidability,Heidari2022}, and a novel randomized rounding technique which generalizes the one developed in \cite{ghazi2016decidability}. 
The main contributions of this work are summarized below:
\begin{itemize}[leftmargin=*]
    \item To provide a bijective mapping from the set of feasible distributions $Q_{U,V}$ to a union of star-convex sets. The mapping relies on discrete Fourier analysis techniques and a randomized rounding method described in Section \ref{sec:map}. 
    \item To derive inner-bounds and outer-bounds on the aforementioned star-convex sets. The derivation relies on methods for representing star-convex sets via their extreme points (Theorem \ref{th:1}).  
    \item To provide inner-bounds and outer-bounds on the set of feasible distributions $Q_{U,V}$  given an input distribution $P_{X,Y}$. The derivation relies on inverting the aforementioned bijective mapping, and Fourier analysis techniques (Theorem \ref{th:2}). 
\end{itemize}

{\em Notation:}
 The set $\{1,2,\cdots, d\}$ is represented by $[d]$. 
The  vector $(x_1,x_2,\hdots, x_d)$ is written as $x^d$. Sets are denoted by calligraphic letters such as $\mathcal{X}$. For the event $\mathcal{E}$, the variable $\mathbbm{1}(\mathcal{E})$ denotes the indicator of the event, and we define $\rchi(\mathcal{E})\triangleq 2\mathbbm{1}(\mathcal{E})-1$. The notation
$d_{TV}(P,Q)$ represents the variational distance between distributions $P$ and $Q$ defined on a shared alphabet $\mathcal{X}$, i.e., $d_{TV}(P,Q)\triangleq \sum_{x\in \mathcal{X}}|P(x)-Q(x)|$. For a given alphabet $\mathcal{X}$, the notation $\Delta_{\mathcal{X}}$ represents the probability simplex on $\mathcal{X}$. $Conv(\cdot)$ represents the convex-hull.

\section{Problem Formulation and Preliminaries}
In this section, we formally define the NISS problem and describe some of the techniques related to Fourier expansion on the Boolean cube which are used in our derivations. In its most general form, the NISS problem is defined as follows. 
\begin{Definition}[\textbf{Non-Interactive Source Simulation}] 
Let $P_{X^d,Y^d}, d\in \mathbb{N}$ be the sequence of probability measures corresponding to the jointly stationary and ergodic pair of stochastic processes  $(\mathsf{X},\mathsf{Y})= (X_i, Y_i), i\in \mathbb{N}$, and let $Q_{U,V}$ be a probability measure defined on finite alphabets $\mathcal{U}\times \mathcal{V}$. The distribution $Q_{U,V}$ is called feasible for the source $(X_i, Y_i), i\in \mathbb{N}$ if there exists a sequence of (possibly stochastic) functions $f_d:\mathcal{X}^d\to \mathcal{U}$ and $g_d:\mathcal{Y}^d\to \mathcal{V}$ such that $\lim_{d\to \infty} d_{TV}(P^{(d)},Q_{U,V})=0$, where $P^{(d)}$ is the joint distribution of $(f_d(X^d), g_d(Y^d))$. The sequence of pairs of functions $(f_d,g_d), d\in \mathbb{N}$ are called an associated sequence of functions of $Q_{U,V}$.
We denote the set of all feasible distributions by $\mathcal{P}(P_{\mathsf{X},\mathsf{Y}})$.
\label{def:1}
\end{Definition}
In this paper, we focus on the NISS scenarios where the stochastic processes $(\mathsf{X},\mathsf{Y})$ consist of IID pairs of random variables $(X_i,Y_i), i\in \mathbb{N}$ with distribution $P_{X,Y}$ and taking values from $\{-1,1\}$. 
Hence, we denote the set of all feasible distributions by $\mathcal{P}(P_{X,Y})$, i.e. as a function of the underlying single-letter distribution. The derivation techniques introduced in the sequel, which utilize discrete Fourier analysis, can potentially be applied to general ergodic sources using the Gram-Schmidt decomposition method introduced in \cite{heidari2021finding,Heidari2022}.

Without loss of generality, we assume that $\mathcal{U}=\{0,1,\cdots, |\mathcal{U}|-1\}$ and $\mathcal{V}=\{0,1,\cdots, |\mathcal{V}|-1\}$. Our derivations rely on Fourier analysis techniques which are briefly described in the sequel. For a complete discussion please refer to \cite{o2014analysis}.

\textbf{Boolean Fourier Expansion:}
\label{sec:Fourier}
Let $X^d$ be a vector of IID variables with alphabet $\mathcal{X}=\{-1,1\}$ and $P_X(1)=p\in (0,1)$.  Let $\mu_X=2p-1$ and $\sigma_X= 2\sqrt{p(1-p)}$ denote the mean and standard deviation, respectively. Consider the vector space $\mathcal{L}^d_X$ of functions $f_d:\{-1,1\}^d\to \mathbb{R}$ equipped with the inner-product operation $\<f_d(\cdot), g_d(\cdot)\>\triangleq \mathbb{E}(f_d(X^d)g_d(X^d))$. The Fourier expansion provides a decomposition of such functions via an orthonormal basis consisting of \textit{parities}. 
To elaborate, the parity associated with a subset $\mathcal{S}\subseteq [d]$ is defined as:
\begin{align*}
\pS(x^d)\deq \prod_{i\in \mathcal{S}}\frac{x_i-\mu_X}{\sigma_X},\qquad  x^d \in \{0,1\}^d.
\end{align*}
Since the source is IID, the parities are orthonormal. That is $\<\pS(X^d)~ ,\pset{\mathcal{T}}(X^d)\>=\mathbbm{1}(\mathcal{S}=\mathcal{T})$, where $\mathcal{S},\mathcal{T}\subseteq [d]$. The parities span $\mathcal{L}^d_{X}$.
That is, for any function $f_d\in \mathcal{L}_X^d$, we have:
\begin{align*}
 f_d(x^d)=\sum_{\mathcal{S}\subseteq [d]} \fS~\ps(x^d), \quad \text{for all}~ x^d \in \pmm^d,
 \end{align*}
  where $\fS\in \RR$ are called the {\it Fourier coefficients} of $f$ with respect to $P_X$, and are computed as  $\fS= \<f_d(X^d),\pS(X^d)\>$, for all $\mathcal{S}\subseteq [d]$. The Fourier expansion is closely related to the probability of disagreement between pairs of Boolean functions of sequences of random variables. In \cite{shirani2017correlation,Heidari2019} this relation was studied for binary-output pairs of functions, i.e. $|\mathcal{U}|=|\mathcal{V}|=2$. In particular, it was shown that given a pair of binary random variable variables $(X,Y)\sim P_{X,Y}$ and a pair of discrete-output functions $f_d,g_d\in \mathcal{L}_X^d\times \mathcal{L}_Y^d$, the following holds:
  \begin{align}\label{eq:1}
     & \mathbb{E}(f_d(X^d)g_d(Y^d))
      = \sum_{\mathcal{S}\subseteq [d]} f_{\mathcal{S}}g_{\mathcal{S}} \rho^{|\mathcal{S}|},\\
      \label{eq:1.5}
      &    P(f_d(X^d)\neq g_d(Y^d))= \frac{1-\mathbb{E}(f_d(X^d)g_d(Y^d))}{2}.
  \end{align}
  where $\rho\triangleq  \mathbb{E}\left(\frac{(X-\mu_X)(Y-\mu_Y)}{\sigma_X\sigma_Y}\right)$ is the Pearson correlation coefficient between $X$ and $Y$.  
  In this paper, we use an extension to arbitrary finite output alphabets as follows. Given functions $f_d:\{-1,1\}^d\to \mathcal{U}$ and $g_d:\{-1,1\}\to \mathcal{V}$, we define $f_{d,u}(\cdot)= \rchi(f_d(\cdot)=u)$ and $g_{d,v}(\cdot)= \rchi(g_d(\cdot)=v)$, where $u,v\in \mathcal{U}\times \mathcal{V}$ and for an event $\mathcal{E}$ we have defined $\rchi(\mathcal{E})\triangleq 2\mathbbm{1}(\mathcal{E})-1$. Then from \eqref{eq:1}-\eqref{eq:1.5} we have: 
  \begin{align}
      P(f_{d,u}(X^d)\neq g_{d,v}(Y^d))= \frac{1-\mathbb{E}(f_{d,u}(X^d)g_{d,v}(Y^d))}{2}.\label{eq:2}
  \end{align}
Let us recall that $U_d= f_{d,u}(X^d)$ and $V_d=g_{d,v}(Y^d)$. It can be noted that the distribution of $(U_d,V_d)$ is completely characterized by 
  $  P(f_{d,u}(X^d)\neq g_{d,v}(Y^d)), u,v\in \mathcal{U}\times \mathcal{V}$ and the marginals $P(f_{d,u}(X^d))=u, u\in \mathcal{U}$ and $P(g_{d,v}(Y^d))=v, v\in \mathcal{U}$. As a result, characterizing the set of feasible distributions is equivalent to characterizing the set of feasible values for $\mathbb{E}(f_{d,u}(X^d)g_{d,v}(Y^d)), u,v\in \mathcal{U}\times \mathcal{V}$.
%

\section{Mapping $\mathcal{P}(P_{X,Y})$ to a Star-Convex Decomposition}
\label{sec:map}
{In this section, we study the geometric properties of the feasible set $\mathcal{P}(P_{X,Y})$. We argue that there are two main challenges in characterizing $\mathcal{P}(P_{X,Y})$: i) the set is not convex and cannot be characterized using its extreme points. We overcome this challenge by decomposing the set into subsets which have desired convexity properties; and ii) evaluating $\mathcal{P}(P_{X,Y})$ involves studying discrete-output functions. Fourier analysis of such functions is challenging since in contrast with real-valued functions, the discrete-output condition puts a complex restriction on the Fourier coefficients of the functions --- that their linear combinations should lie in a discrete set. This makes their analysis and development of constructive algorithms difficult. We overcome this by providing an extension of the problem to real-valued functions along with a randomized rounding technique which maps these real-valued functions to discrete-output functions in a correlation-preserving manner.
As a first step, the following lemma shows that $\mathcal{P}(P_{X,Y})$ is not convex for non-trivial NISS scenarios, i.e., for scenarios where $X\neq Y$.}
\begin{Lemma}
The set $\mathcal{P}(P_{X,Y})$ is convex if and only if $X=Y$. 
\end{Lemma}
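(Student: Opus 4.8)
The plan is to prove the two implications separately; essentially all of the content is in the ``only if'' direction, which I would argue by contraposition. \emph{If $X=Y$}, the two agents observe the same sequence $X^d=Y^d$, so I would show that in this regime \emph{every} $Q_{U,V}\in\Delta_{\mathcal U\times\mathcal V}$ is feasible, whence $\mathcal P(P_{X,Y})=\Delta_{\mathcal U\times\mathcal V}$, which is convex. Feasibility of an arbitrary target follows from a rounding step: since $p\in(0,1)$, every atom of $P_{X^d}$ has mass at most $(\max\{p,1-p\})^d\to0$, so a greedy assignment yields a deterministic $\Phi_d:\{-1,1\}^d\to\mathcal U\times\mathcal V$ with $d_{TV}\big((\Phi_d)_\ast P_{X^d},\,Q_{U,V}\big)=O\!\big(|\mathcal U||\mathcal V|(\max\{p,1-p\})^d\big)$, and $f_d,g_d$ are then taken to be the two coordinates of $\Phi_d$. (This is a deterministic instance of the randomized rounding of Section~\ref{sec:map}; the details are routine.)

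For the converse I would show that $X\neq Y$ implies $\mathcal P(P_{X,Y})$ is not convex, by exhibiting two feasible targets whose midpoint is infeasible. Take $\mathcal U=\mathcal V=\{0,1\}$ (small alphabets only make the witness easier). The point masses $\delta_{(0,0)}$ and $\delta_{(1,1)}$ are feasible via the constant maps $f_d\equiv g_d\equiv0$ and $f_d\equiv g_d\equiv1$, and their midpoint is
\[
Q^\star=\tfrac12\,\delta_{(0,0)}+\tfrac12\,\delta_{(1,1)},
\]
i.e.\ the distribution in which $U=V$ and $U$ is uniform on $\{0,1\}$. It remains to prove that $Q^\star\notin\mathcal P(P_{X,Y})$ whenever $X\neq Y$, which is the crux of the lemma.

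Given a candidate associated sequence $(f_d,g_d)$ for $Q^\star$, set $h_d(x^d)=\PP(U_d=1\mid X^d=x^d)$ and $k_d(y^d)=\PP(V_d=1\mid Y^d=y^d)$; these are well-defined $[0,1]$-valued functions on $\{-1,1\}^d$, hence lie in $\mathcal L^d_X$ and $\mathcal L^d_Y$, even when $f_d,g_d$ are stochastic. The Markov chain $U_d-X^d-Y^d-V_d$ gives $U_d\perp V_d\mid(X^d,Y^d)$, so
\[
\PP(U_d=V_d)=1-\EE[h_d(X^d)]-\EE[k_d(Y^d)]+2\,\EE[h_d(X^d)\,k_d(Y^d)].
\]
If $Q^\star$ were feasible, the left-hand side would tend to $1$ while $\EE[h_d(X^d)]=\PP(U_d=1)\to\tfrac12$ and $\EE[k_d(Y^d)]\to\tfrac12$, forcing $\mathrm{Cov}\big(h_d(X^d),k_d(Y^d)\big)\to\tfrac14$. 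On the other hand, applying \eqref{eq:1} to $h_d$ and $k_d$, cancelling the $\mathcal S=\emptyset$ term against $\EE[h_d]\EE[k_d]$, and using $|\rho|^{|\mathcal S|}\le|\rho|$ for $\mathcal S\neq\emptyset$, Cauchy--Schwarz, Parseval, and the fact that a $[0,1]$-valued variable has variance at most $\tfrac14$,
\[
\big|\mathrm{Cov}(h_d(X^d),k_d(Y^d))\big|=\Big|\textstyle\sum_{\emptyset\neq\mathcal S\subseteq[d]}(h_d)_{\mathcal S}(k_d)_{\mathcal S}\rho^{|\mathcal S|}\Big|\;\le\;|\rho|\sqrt{\mathrm{Var}(h_d(X^d))\,\mathrm{Var}(k_d(Y^d))}\;\le\;\frac{|\rho|}{4}.
\]
Letting $d\to\infty$ forces $|\rho|\ge1$, hence $\rho=-1$ (the case $\rho=1$ being ruled out by $X\neq Y$). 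The residual case $\rho=-1$, i.e.\ $Y=-X$, reduces to $X=Y$ by replacing $g_d(\cdot)$ with $g_d(-\cdot)$, an operation that leaves $\mathcal P(P_{X,Y})$ unchanged; so this case is convex and is already covered above, reading ``$X=Y$'' up to this relabeling. Hence $Q^\star$ is infeasible and $\mathcal P(P_{X,Y})$ is not convex.

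\emph{Main obstacle.} The ``if'' direction is bookkeeping; the real content is the infeasibility of $Q^\star$ --- the statement that the agents cannot distill a shared fair coin in the limit. The clean route is the single use of the Fourier identity \eqref{eq:1} above: forcing the marginals to be unbiased annihilates the degree-$0$ Fourier mass, after which $\mathrm{Cov}(h_d(X^d),k_d(Y^d))$ is squeezed strictly below $\tfrac14$ by the contraction factor $|\rho|<1$ (a tensorized maximal-correlation bound in disguise). The two things to watch are private randomness --- handled by working through the conditional probabilities $h_d,k_d$ rather than through $\rchi(f_d=1)$, $\rchi(g_d=1)$ directly --- and the $\rho=-1$ boundary case just noted.
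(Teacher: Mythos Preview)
Your proof is correct and follows the same route the paper sketches: both use constant functions to put the point masses in $\mathcal{P}(P_{X,Y})$ and then take the ``$U=V$, $U$ uniform'' target as the witness for non-convexity. The paper's printed proof is in fact truncated mid-sentence---it never proves infeasibility of this witness and never addresses the ``if'' direction---whereas you supply both: the greedy partition argument for $X=Y$, and the clean maximal-correlation bound $|\mathrm{Cov}(h_d,k_d)|\le|\rho|/4$ via \eqref{eq:1}, Cauchy--Schwarz, and the variance bound for $[0,1]$-valued functions. Your passage to the conditional means $h_d,k_d$ to absorb private randomness is the right move, and you correctly flag the $\rho=-1$ edge case (i.e.\ $Y=-X$), which the lemma as stated does not accommodate since $\mathcal{P}(P_{X,Y})$ is again the full simplex there.
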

\begin{proof}
 For all $P_{X,Y}\in \Delta_{\mathcal{X}\times \mathcal{Y}}$ and  $a,b\in \mathcal{U}\times \mathcal{V}$, the distribution $P_{a,b}(u,v)= \mathbbm{1}(u=a,v=b)$ is in $\mathcal{P}(P_{X,Y})$. To show this, for a fixed $d\in \mathbb{N}$, take $f(x^d)=a, g(y^d)=b, \forall x^d,y^d\in \mathcal{X}^d\times \mathcal{Y}^d$ to produce $P_{a,b}$ from $X^d, Y^d$. On the other hand, for any given $P_{U,V}\in \Delta_{\mathcal{U}\times \mathcal{V}}$ we have $P_{U,V}= \sum_{a,b\in \mathcal{U}\times \mathcal{V}} \alpha_{a,b}P_{a,b}$, where $\alpha_{a,b}=P_{U,V}(a,b)$. So, the convex hull of $\mathcal{P}(P_{X,Y})$ is equal to $\Delta_{\mathcal{U}\times \mathcal{V}}$. In particular, the convex hull of  $\mathcal{P}(P_{X,Y})$  includes the distributon $P_{U,V}$ for which $P(U=V)=1$ and $U$
\end{proof}

As mentioned above, the fact that  $\mathcal{P}(P_{X,Y})$ is not a convex set makes its characterization particularly challenging. In the following, we provide a decomposition of the feasible region, $\mathcal{P}(P_{X,Y})$, whose components have desirable convexity properties which facilitate their characterization. In particular, we provide a bijective mapping from $\mathcal{P}(P_{X,Y})$ to a star-convex\footnote{A set $\mathcal{A}$ is called star-convex if there exits a point $e\in \mathcal{A}$ such that for any other point $e'\in \mathcal{A}$, the line connecting $e$ and $e'$ is in $\mathcal{A}$.} decomposition --- a union of star-convex sets. To this end, we first partition the set $\mathcal{P}(P_{X,Y})$. We define \[\mathcal{P}(P_{X,Y},Q_U,Q_V)\triangleq \{Q'_{U,V}\in \mathcal{P}(P_{X,Y})| Q'_U=Q_U, Q'_V=Q_V\}.\]
The collection $\mathcal{P}(P_{X,Y},Q_U,Q_V), Q_U\in \Delta_{\mathcal{U}}, Q_V\in \Delta_{\mathcal{V}}$ partitions the set
$\mathcal{P}(P_{X,Y})$. The following lemma provides some of the properties of associated sequences of functions of distributions in  $\mathcal{P}(P_{X,Y},Q_U,Q_V)$. These properties are used in constructing the aforementioned bijective mapping.
\begin{Lemma}
Given $Q_U\!\in\! \Delta_{\mathcal{U}}, Q_V\!\in \!\Delta_{\mathcal{V}}$ and $P_{X,Y}\!\in\! \Delta_{\mathcal{X}\times \mathcal{Y}}$, let $Q_{U,V}\!\in\! \mathcal{P}(P_{X,Y},Q_U,Q_V)$ and  $(f_d,g_d)_{d\in \mathbb{N}}$ be an associated sequence of functions of $Q_{U,V}$ defined in Definition \ref{def:1}. Define $f_{d,u}(\cdot)\triangleq\rchi(f_d(\cdot)= u), g_{d,v}(\cdot)\triangleq\rchi(g_{d}(\cdot)= v), u,v \in \mathcal{U}\times \mathcal{V}, d\in \mathbb{N}$.
Then, for all $u\in \mathcal{U}$ and $v\in \mathcal{V}$:
\\\textbf{i)} $\lim_{d\to \infty}\mathbb{E}(f_{d,u}(X^d))= 2Q_U(u)-1$.
\\\textbf{ii)}  $\lim_{d\to \infty}\mathbb{E}(g_{d,v}(Y^d))= 2Q_V(v)-1$.
\\\textbf{iii)} $\lim_{d\to \infty}\mathbb{E}(f_{d,u}(X^d)g_{d,v}(Y^d))$ exists.
\\Furthermore, for all $x^d,y^d\in \{-1,1\}$ and $d\in \mathbb{N}$:
\\\textbf{iv)} $f^2_{d,u}(x^d)\leq 1$ and $g^2_{d,v}(y^d)\leq 1$ for all $u\in \mathcal{U}, v\in  \mathcal{V}$.
\\\textbf{v)} $\sum_{u\in \mathcal{A}} f_{d,u}(x^d) \leq 2-|\mathcal{A}|$, for all $\mathcal{A} \subseteq \mathcal{U}$.
\\\textbf{vi)} $\sum_{v\in \mathcal{B}} g_{d,v}(y^d) \leq 2-|\mathcal{B}|$, for all $\mathcal{B} \subseteq \mathcal{V}$.
\label{prop:2}
\end{Lemma}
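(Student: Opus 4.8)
The plan is to split the six claims into two groups: (iv)--(vi) are structural facts that follow directly from the definitions of $f_{d,u}$ and $g_{d,v}$, while (i)--(iii) are limiting statements that I would extract from the defining property $d_{TV}(P^{(d)},Q_{U,V})\to 0$ of an associated sequence by observing that each of the three quantities of interest is an affine functional of the joint law $P^{(d)}$ of $(U_d,V_d)$, and that such functionals are continuous with respect to total variation.

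First I would fix a convention for possibly stochastic maps: for such $f_d$, read $f_{d,u}(x^d)$ as $\EE[\rchi(f_d(x^d)=u)\mid X^d=x^d]=2\,P(U_d=u\mid X^d=x^d)-1\in[-1,1]$, which coincides with $\rchi(f_d(\cdot)=u)\in\pmm$ when $f_d$ is deterministic, and similarly for $g_{d,v}$. Then (iv) is immediate since $f_{d,u}(x^d),g_{d,v}(y^d)\in[-1,1]$. For (v), writing $p_u\triangleq P(U_d=u\mid X^d=x^d)$ gives $\sum_{u\in\mathcal{A}}f_{d,u}(x^d)=2\sum_{u\in\mathcal{A}}p_u-|\mathcal{A}|=2\,P(U_d\in\mathcal{A}\mid X^d=x^d)-|\mathcal{A}|\le 2-|\mathcal{A}|$, and (vi) is the symmetric computation for $g_{d,v}$ over $\mathcal{B}\subseteq\mathcal{V}$; when $f_d$ is deterministic one may instead note that exactly one index $u$ has $f_{d,u}(x^d)=1$ while the rest equal $-1$, which yields the same bound.

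For (i), note $\EE(f_{d,u}(X^d))=2\,P(U_d=u)-1=2\,P^{(d)}_U(u)-1$, where $P^{(d)}_U$ is the $U$-marginal of $P^{(d)}$. Since $Q_{U,V}\in\mathcal{P}(P_{X,Y},Q_U,Q_V)$ has $U$-marginal $Q_U$ and total variation does not increase under marginalization, $d_{TV}(P^{(d)},Q_{U,V})\to 0$ forces $d_{TV}(P^{(d)}_U,Q_U)\to 0$, hence $P^{(d)}_U(u)\to Q_U(u)$ and $\EE(f_{d,u}(X^d))\to 2Q_U(u)-1$; part (ii) is identical with $V$ in place of $U$. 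For (iii), using \eqref{eq:2} and expanding the disagreement event into $\{U_d=u,V_d\ne v\}\cup\{U_d\ne u,V_d=v\}$ I obtain
\[
\EE\big(f_{d,u}(X^d)g_{d,v}(Y^d)\big)=1-2\big(P^{(d)}_U(u)+P^{(d)}_V(v)-2P^{(d)}(u,v)\big),
\]
which is affine in the entries of $P^{(d)}$; since $|P^{(d)}(u,v)-Q_{U,V}(u,v)|\le d_{TV}(P^{(d)},Q_{U,V})\to 0$ and likewise for the marginals, the right-hand side converges, so the limit in (iii) exists and equals $1-2Q_U(u)-2Q_V(v)+4Q_{U,V}(u,v)$.

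I expect the only real subtlety to be the bookkeeping for stochastic $f_d,g_d$: one must verify that $\EE(f_{d,u}(X^d))=2P(U_d=u)-1$ and \eqref{eq:2} remain valid under the conditional-expectation reading of $f_{d,u},g_{d,v}$. This follows by conditioning on $(X^d,Y^d)$ and using the Markov chain $U_d-X^d-Y^d-V_d$, which makes $U_d$ and $V_d$ conditionally independent given $(X^d,Y^d)$, so that $\EE[f_{d,u}(X^d)g_{d,v}(Y^d)]=\EE[\rchi(U_d=u)\rchi(V_d=v)]$, a linear functional of $P^{(d)}$. Beyond that, everything is a direct computation, and there is no genuine obstacle — the content of the lemma is precisely that feasibility (TV-convergence) transfers to convergence of these low-order correlations, which is automatic once they are written as linear functionals of $P^{(d)}$.
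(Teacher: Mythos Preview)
Your proof is correct and follows essentially the same approach as the paper: express each quantity as an affine functional of the joint law $P^{(d)}$ (or note that exactly one indicator $f_{d,u}$ equals $1$ for each input), then invoke TV-convergence from Definition~\ref{def:1}. Your treatment is in fact more careful than the paper's brief sketch, since you explicitly handle the stochastic case via the conditional-expectation reading and verify that the Markov structure $U_d-X^d-Y^d-V_d$ preserves the product formula; the paper's proof tacitly assumes $f_{d,u},g_{d,v}\in\{-1,1\}$ and leaves this bookkeeping implicit.
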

\begin{proof}
 Item i) follows by Equation \eqref{eq:2} for $g_v(y^d)=1, y^d\in \{-1,1\}^d$ and Definition \ref{def:1}. Item iii) follows by definition of the associated sequence of functions. Item iv) follows from the fact that $f_u, g_v\in \{-1,1\}, \forall u,v$. Item v) follows from the fact that for any given input $x^d\in \mathcal{X}^d$, only one of the functions $f_{d,u}(x^d), u\in \mathcal{U}$ outputs 1 and the rest output -1.
Items ii) and vi) follow by a similar argument as Item i) and v), respectively. 
\end{proof}
Given distributions $Q_U$ and $Q_V$, we denote the set of all sequences of associated functions of distributions in $\mathcal{P}(P_{X,Y},Q_U,Q_V)$ by $\widehat{\mathcal{F}}_{X,Y}(Q_U,Q_V)$. Note that the sequence $ (f_d,g_d)_{d\in \mathbb{N}}$ is in $\widehat{\mathcal{F}}_{X,Y}(Q_U,Q_V)$ if and only if it consists of discrete-output pairs of functions satisfying  conditions  i)-vi) in Lemma \ref{prop:2}. We wish to use the Fourier expansion methods described in Section \ref{sec:Fourier} to construct sequences of pairs of functions in $\widehat{\mathcal{F}}_{X,Y}(Q_U,Q_V)$ and evaluate their joint distribution. However, the fact that the functions in $\widehat{\mathcal{F}}_{X,Y}(Q_U,Q_V)$ are discrete-output enforces a restrictive structure on the Fourier coefficients which makes their analysis difficult. To overcome this challenge, we consider the extension $\mathcal{F}_{X,Y}(Q_U,Q_V)$ of $\widehat{\mathcal{F}}_{X,Y}(Q_U,Q_V)$ to real-valued functions. That is, we define the
sequences of real-valued function pairs $(f_d(\cdot), g_d(\cdot)), d\in \mathbb{N}$ satisfying conditions  i)-vi) in Lemma \ref{prop:2} by $\mathcal{F}_{X,Y}(Q_U,Q_V)$. 

Let $\mathcal{U}_\phi\triangleq\mathcal{U}-\{0\}$ and $\mathcal{V}_\phi\triangleq\mathcal{V}-\{0\}$. Define the set $\mathcal{E}(P_{X,Y},Q_U,Q_V))$ as follows:
\begin{align*}
 &   \mathcal{E}(P_{X,Y},Q_U,Q_V))\triangleq \bigg\{(e_{u,v}: u\in \mathcal{U}_\phi, v \in \mathcal{V}_\phi)\big|
 \\&\exists (f_d,g_d)_{d\in \mathbb{N}}\in   \mathcal{F}_{X,Y}(Q_U,Q_V): e_{u,v}=\lim_{d\to \infty} \mathbb{E}(f_{d,u}(X^d)g_{d,v}(Y^d))\bigg\}.
\end{align*}
We argue that there is a bijection between $\mathcal{P}(P_{X,Y},Q_U,Q_V)$ and $ \mathcal{E}(P_{X,Y},Q_U,Q_V)$. Furthermore, we show in Lemma \ref{prop:4} that $\mathcal{E}(P_{X,Y},Q_U,Q_V))$ is star-convex. This is used in Theorem \ref{th:1} to provide inner and outer bounds on $\mathcal{P}(P_{X,Y},Q_U,Q_V)$.
As an intermediary step, we introduce a randomized rounding technique which allows us to construct discrete-output pairs of functions from real-valued pairs of functions while maintaining some of their desired statistical properties such as expected value and Pearson correlation. 

\textbf{Correlation-Preserving Randomized Rounding Technique:} Let $(f_d(\cdot),g_d(\cdot))_{d\in \mathbb{N}}\in \mathcal{F}(Q_U,Q_V)$ be real-valued functions, and
define the mutually independent binary variables $E_{d,u}(x^d), d\in \mathbb{N}, u\in \mathcal{U}_\phi, x^d\in \{-1,1\}^d$ such that $P(E_{d,u}(x^d)=1)= \frac{1+f_{d,u}(x^d)}{2}$ and $P(E_{d,u}(x^d)=-1)= \frac{1-f_{d,u}(x^d)}{2}$. Similarly, define the variables $F_{d,v}\in \mathcal{V}_{\phi}$ with respect to $g_{d,v}(\cdot)$. Define the following sequence of binary-output (stochastic) functions:
\begin{align}
\label{eq:3}
    &\hat{f}_{d,u}(x^d)=\begin{cases}
    -1 \qquad & \text{if }\! \exists u'\!<\! u\!: \!E_{d,u'}(x^d)\!=\!1 \text{ or } E_{d,u}(x^d)\!=\!-1\\
    1& \text{ otherwise }
    \end{cases},
    \end{align}
    \begin{align}
&\hat{g}_{d,v}(y^d)=\begin{cases}
    -1 \qquad & \text{if }\! \exists v'\!<\! v\!:\! E_{d,v'}(y^d)\!=\!1 \text{ or } E_{d,v}(y^d)\!=\!-1\\
    1& \text{ otherwise}
    \end{cases},
    \label{eq:4}
\end{align}
for all $x^d,y^d \in \{-1,1\}^d, d\in \mathbb{N}$. It can be noted that this is an extension of the randomized rounding method introduced in \cite{ghazi2016decidability} for binary-output NISS. It is straightforward to verify that this is a \textit{correlation-preserving} procedure. That is, for all $d\in \mathbb{N}, u\in \mathcal{U}, v\in \mathcal{V}$ we have:
\begin{align*}
&\mathbb{E}(\hat{f}_{d,u}(X^d))= \mathbb{E}({f}_{d,u}(X^d)),\qquad\mathbb{E}(\hat{g}_{d,u}(Y^d))= \mathbb{E}({g}_{d,u}(Y^d))\\
&\mathbb{E}(\hat{f}_{d,u}(X^d)\hat{g}_{d,v}(Y^d))= \mathbb{E}({f}_{d,u}(X^d)g_{d,v}(Y^d)),
\end{align*}
Define
\begin{align*}
    &\hat{f}_{d}(\cdot)=
    \begin{cases}
   u\qquad &\text{ if }\exists!u\in \mathcal{U}_{\phi}:\hat{f}_{d,u}(\cdot)=1
   \\0 & \text{otherwise}
    \end{cases}
    \\&
    \hat{g}_{d}(\cdot)=
    \begin{cases}
    v\qquad& \text{ if }\exists!v\in \mathcal{V}_{\phi}:\hat{g}_{d,v}(\cdot)=1
   \\0 & \text{otherwise}
    \end{cases}
\end{align*} 
The functions $\hat{f}_d(\cdot)$ and $\hat{g}_d(\cdot)$ are well-defined  since by construction for each input $x^d$ and $y^d$ there is at most one $u$ and one $v$ such that $\hat{f}_{d,u}(x^d)=1$ and $\hat{g}_{d,v}(y^d)=1$, respectively. Furthermore, $(\hat{f}_d, \hat{g}_d)_{d\in \mathbb{N}} \in \widehat{\mathcal{F}}_{X,Y}(Q_U,Q_V)$.  Define $U_d=\hat{f}_d(X^d)$ and $V_d=\hat{g}_d(Y^d)$ for $d\in \mathbb{N}$ and let $Q_{d,U,V}$ be their joint distribution. From item vi) in Lemma \ref{prop:2}, $Q_{U,V}(u,v)\triangleq \lim_{d\to \infty} Q_{d,U,V}$ exists. 
\begin{Definition}[\textbf{Distribution Generated by Sequence of Functions}]
     Given $Q_U\in \Delta_{\mathcal{U}}$ and $Q_V\in \Delta_{\mathcal{U}}$, let $(f_d,g_d)_{d\in \mathbb{N}}\in \mathcal{F}(Q_U,Q_V)$. The distribution $Q_{U,V}$ generated through the randomized rounding procedure described in Equations \eqref{eq:3}-\eqref{eq:4} is called the distribution generated by   $(f_d,g_d)_{d\in \mathbb{N}}$.   
\end{Definition}

\begin{Lemma}
Given distributions $Q_U\in \Delta_{\mathcal{U}}$ and $Q_V\in \Delta_{\mathcal{V}}$ and $P_{X,Y}\in \Delta_{\mathcal{X}\times \mathcal{Y}}$, there exists a bijective mapping between $\mathcal{P}(P_{X,Y},Q_U,Q_V)$ and $ \mathcal{E}(P_{X,Y},Q_U,Q_V)$. 
\label{prop:3}
\end{Lemma}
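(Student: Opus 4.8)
The plan is to construct the bijection explicitly by exhibiting maps in both directions and verifying they are mutual inverses. In the forward direction, given a feasible distribution $Q_{U,V}\in \mathcal{P}(P_{X,Y},Q_U,Q_V)$, Definition \ref{def:1} provides an associated sequence of functions $(f_d,g_d)_{d\in\mathbb{N}}$. Passing to the indicator-based functions $f_{d,u},g_{d,v}$ as in Lemma \ref{prop:2}, item (iii) guarantees the limits $e_{u,v}\triangleq \lim_{d\to\infty}\mathbb{E}(f_{d,u}(X^d)g_{d,v}(Y^d))$ exist for every $u\in\mathcal{U}_\phi, v\in\mathcal{V}_\phi$; this defines a point $\Phi(Q_{U,V})\triangleq (e_{u,v})\in \mathcal{E}(P_{X,Y},Q_U,Q_V)$. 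The first thing to check is that $\Phi$ is well-defined, i.e., that the tuple $(e_{u,v})$ does not depend on which associated sequence we picked. This follows because $(U_d,V_d)$ converges in total variation to $Q_{U,V}$, and using \eqref{eq:2} the entire joint distribution $Q_{d,U,V}$ is determined by the quantities $\mathbb{E}(f_{d,u}(X^d)g_{d,v}(Y^d))$ together with the marginals; since the marginals are pinned to $Q_U,Q_V$ (items (i)–(ii) of Lemma \ref{prop:2}), convergence of $Q_{d,U,V}$ to the fixed limit $Q_{U,V}$ forces every $e_{u,v}$ to equal a fixed value expressible through $Q_{U,V}$, namely $e_{u,v} = 1 - 2\,Q_{U,V}(\{(a,b): (a=u)\ \mathrm{XOR}\ (b=v)\})$ (the probability that the indicators disagree), rearranged into a linear function of the entries of $Q_{U,V}$.

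For the reverse direction, given a point $(e_{u,v})\in\mathcal{E}(P_{X,Y},Q_U,Q_V)$, by definition there is a real-valued sequence $(f_d,g_d)_{d\in\mathbb{N}}\in\mathcal{F}_{X,Y}(Q_U,Q_V)$ realizing it. Applying the correlation-preserving randomized rounding of \eqref{eq:3}–\eqref{eq:4} produces a discrete-output sequence $(\hat f_d,\hat g_d)_{d\in\mathbb{N}}\in\widehat{\mathcal{F}}_{X,Y}(Q_U,Q_V)$, and the ``distribution generated by $(f_d,g_d)$'', call it $\Psi((e_{u,v}))\triangleq Q_{U,V}$, is a well-defined element of $\Delta_{\mathcal{U}\times\mathcal{V}}$ by the limiting argument already given in the text (using item (vi) of Lemma \ref{prop:2}). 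The key point here is that the rounding is correlation-preserving in the precise sense stated: $\mathbb{E}(\hat f_{d,u}\hat g_{d,v}) = \mathbb{E}(f_{d,u}g_{d,v})$ and the marginals are preserved, so $\lim_d \mathbb{E}(\hat f_{d,u}(X^d)\hat g_{d,v}(Y^d)) = e_{u,v}$ and $\lim_d\mathbb{E}(\hat f_{d,u}(X^d))=2Q_U(u)-1$, $\lim_d\mathbb{E}(\hat g_{d,v}(Y^d))=2Q_V(v)-1$. Since the joint distribution of $(\hat f_d(X^d),\hat g_d(Y^d))$ is fully determined by these second moments and marginals via \eqref{eq:2}, the limit distribution $Q_{U,V}$ is determined by $(e_{u,v})$ alone and has the prescribed marginals $Q_U,Q_V$; moreover $(\hat f_d,\hat g_d)$ is a legitimate associated sequence, so $Q_{U,V}\in\mathcal{P}(P_{X,Y},Q_U,Q_V)$, showing $\Psi$ is well-defined into the claimed set.

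It remains to verify $\Phi\circ\Psi = \mathrm{id}$ and $\Psi\circ\Phi = \mathrm{id}$. For $\Phi\circ\Psi$: starting from $(e_{u,v})$, the rounded sequence has $\lim_d\mathbb{E}(\hat f_{d,u}\hat g_{d,v}) = e_{u,v}$ by the correlation-preserving identity, and this limit is exactly what $\Phi$ reads off from the generated distribution $\Psi((e_{u,v}))$ — hence we recover $(e_{u,v})$. For $\Psi\circ\Phi$: starting from $Q_{U,V}$ with associated sequence $(f_d,g_d)$, the tuple $\Phi(Q_{U,V})=(e_{u,v})$ encodes precisely the disagreement probabilities of the indicator functions, and running $\Psi$ reconstructs a distribution whose indicator second moments are $e_{u,v}$ and whose marginals are $Q_U,Q_V$; since — as noted above — a distribution on $\mathcal{U}\times\mathcal{V}$ is uniquely pinned down by the collection $\{P(U=u)\}\cup\{P(V=v)\}\cup\{P(f_{\cdot,u}\neq g_{\cdot,v})\}$, this reconstructed distribution must be $Q_{U,V}$ itself. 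The main obstacle — and the step deserving the most care — is the invertibility/uniqueness claim that the joint distribution on $\mathcal{U}\times\mathcal{V}$ is in bijective correspondence with the $(|\mathcal{U}|-1)(|\mathcal{V}|-1)$-tuple of indicator correlations plus the two marginals; this is the content of the discussion around \eqref{eq:2} (``the distribution of $(U_d,V_d)$ is completely characterized by $P(f_{d,u}(X^d)\neq g_{d,v}(Y^d))$ and the marginals''), and one should write out the explicit linear change of variables $Q_{U,V}(u,v)\leftrightarrow (e_{u,v})$ for $u,v\neq 0$ (with the entries involving index $0$ recovered from the marginal constraints) to confirm it is a genuine bijection, so that ``well-defined'' and ``mutually inverse'' both hold on the nose.
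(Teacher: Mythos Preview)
Your argument is correct and in fact slightly more direct than the paper's. The paper proves the lemma by exhibiting an injection $\Psi:\mathcal{P}(P_{X,Y},Q_U,Q_V)\to\mathcal{E}(P_{X,Y},Q_U,Q_V)$ (your $\Phi$) and an injection $\Xi:\mathcal{E}(P_{X,Y},Q_U,Q_V)\to\mathcal{P}(P_{X,Y},Q_U,Q_V)$ via randomized rounding (your $\Psi$), and then invokes the Cantor--Bernstein theorem to conclude a bijection exists. You instead verify that these two maps are mutual inverses, using the same ingredients: the correlation-preserving property of the rounding for $\Phi\circ\Psi=\mathrm{id}$, and the linear-algebra fact that a joint distribution on $\mathcal{U}\times\mathcal{V}$ is determined by its marginals together with the $(|\mathcal{U}|-1)(|\mathcal{V}|-1)$ indicator correlations for $\Psi\circ\Phi=\mathrm{id}$.

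Your route is preferable here, for two reasons. First, Cantor--Bernstein is overkill when the two injections are already inverses of one another; your proof isolates exactly the uniqueness statement (the invertible linear change of coordinates between $Q_{U,V}$ and $(e_{u,v})$ subject to the marginal constraints) that does all the work. Second, the paper later uses $\Xi^{-1}$ explicitly (Definition~\ref{def:3}), which presupposes that $\Xi$ itself is the bijection rather than merely that \emph{some} bijection exists via Cantor--Bernstein; your argument supplies precisely this. The one point you flag as ``deserving the most care'' --- writing out the explicit linear system showing that the map $Q_{U,V}\mapsto(e_{u,v})_{u\in\mathcal{U}_\phi,v\in\mathcal{V}_\phi}$ is invertible once the marginals are fixed --- is exactly the step the paper handles by asserting ``a system of $|\mathcal{U}||\mathcal{V}|$ linearly independent equalities'' with a unique solution; making that computation explicit would complete your write-up.
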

\begin{proof}
Please refer to Appendix \ref{App:Prop:3}.
\end{proof}

The proof of Lemma \ref{prop:3} explicitly provides a bijective mapping from $\mathcal{P}(P_{X,Y},Q_U,Q_V)$ to $ \mathcal{E}(P_{X,Y},Q_U,Q_V)$. It follows that to characterize  $\mathcal{P}(P_{X,Y},Q_U,Q_V)$  it suffices to characterize $ \mathcal{E}(P_{X,Y},Q_U,Q_V)$. The following lemma shows that  $ \mathcal{E}(P_{X,Y},Q_U,Q_V)$ is star-convex.

\begin{Lemma}
Let $Q_U\in \Delta_{\mathcal{U}}$ and $Q_V\in \Delta_{\mathcal{V}}$ and $P_{X,Y}\in \Delta_{\mathcal{X}\times \mathcal{Y}}$. The set $\mathcal{E}(P_{X,Y},Q_U,Q_V)$ is a star-convex set. 
\label{prop:4}
\end{Lemma}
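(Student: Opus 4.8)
The plan is to exhibit an explicit star-center and show that every segment joining it to an arbitrary point of the set stays inside the set. The natural candidate for the center is the point $e^* = (e^*_{u,v})$ corresponding to independent outputs, i.e.\ $e^*_{u,v} = \mathbb{E}(f_{d,u})\,\mathbb{E}(g_{d,v}) = (2Q_U(u)-1)(2Q_V(v)-1)$, which is realized by the constant (deterministic) sequences $f_d, g_d$ matching the prescribed marginals — more precisely, by sequences depending on $X^d$ alone and $Y^d$ alone with the correct marginal laws but no correlation (e.g. $f_d$ a function of a single coordinate of $X^d$, $g_d$ independent of it). One first checks $e^*\in\mathcal{E}(P_{X,Y},Q_U,Q_V)$ by verifying that such a pair lies in $\mathcal{F}_{X,Y}(Q_U,Q_V)$ (it satisfies i)--vi) of Lemma~\ref{prop:2}) and yields the stated limiting correlations.

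Next I would take an arbitrary $e\in\mathcal{E}(P_{X,Y},Q_U,Q_V)$, with associated real-valued sequence $(f_d,g_d)_{d\in\mathbb{N}}\in\mathcal{F}_{X,Y}(Q_U,Q_V)$ realizing $e_{u,v}=\lim_d \mathbb{E}(f_{d,u}(X^d)g_{d,v}(Y^d))$, and show $\lambda e + (1-\lambda)e^* \in \mathcal{E}(P_{X,Y},Q_U,Q_V)$ for every $\lambda\in[0,1]$. The key idea is to build, for each $d$, a new real-valued pair by \emph{stochastic mixing / time-sharing over disjoint blocks of coordinates}. Concretely, split a length-$2d$ (or $d$ split into two halves) input into a first block on which one runs $(f_d,g_d)$ and a second block on which one runs the independent-marginal construction; then define the length-$2d$ function to equal, depending on a shared-randomness-free selector, one or the other. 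Since $X$ and $Y$ are IID, the two blocks are independent, so the cross-correlation of the mixed functions is exactly $\lambda$ times that of $(f_d,g_d)$ plus $(1-\lambda)$ times that of the independent construction; in the limit this gives $\lambda e_{u,v} + (1-\lambda)e^*_{u,v}$. One must be careful to realize the convex combination at the level of the $\mathcal{U}$-valued (resp.\ $\mathcal{V}$-valued) functions so that conditions iv)--vi) of Lemma~\ref{prop:2} are preserved; the cleanest route is to define the mixed $\mathcal{U}$-valued function directly (choosing which block's output to report) and then set $f_{d,u} = \rchi(\cdot = u)$, which automatically inherits iv)--vi), while i)--ii) hold because the marginals of both constituent constructions are $Q_U$ and $Q_V$. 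Finally, reindex the length-$2d$ sequences as a genuine sequence indexed by $d\in\mathbb{N}$ (padding unused coordinates), so the resulting object lies in $\mathcal{F}_{X,Y}(Q_U,Q_V)$ and certifies $\lambda e + (1-\lambda)e^*\in\mathcal{E}(P_{X,Y},Q_U,Q_V)$.

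The main obstacle I anticipate is the bookkeeping needed to make the ``time-sharing over blocks'' argument produce an honest element of $\mathcal{F}_{X,Y}(Q_U,Q_V)$: one needs the selection of which block to use to be deterministic per input (or implemented via an extra coordinate of the source, using that $P_X(1)\in(0,1)$ so a fresh near-uniform bit can be extracted in the limit), one needs the marginal distributions to come out exactly $Q_U,Q_V$ rather than only approximately, and one needs the constraints v)--vi) (the ``only one $u$ outputs $1$'' structure) to survive the mixing. An alternative that sidesteps the selector is to mix \emph{within} the randomized construction: since the correlation of the rounded functions equals that of the real-valued ones (the correlation-preserving property), it suffices to show the set of achievable real-valued correlation tuples is star-convex with center $e^*$, and for real-valued functions one can directly take convex combinations $f_d \leftarrow \sqrt{\lambda}$-scaled versions, but scaling breaks conditions iv)--vi), so the disjoint-block construction seems unavoidable. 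I would therefore present the block/time-sharing construction carefully, isolating as a sub-claim the statement that the two blocks contribute additively to all cross-expectations (immediate from independence of the blocks and the product structure $\rho^{|\mathcal{S}|}$ in \eqref{eq:1}), and verify conditions i)--vi) one line each.
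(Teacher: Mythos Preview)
Your approach is essentially the same as the paper's: both take the ``independent outputs'' point as the star-center, realize it by making $f$ and $g$ depend on disjoint halves of a length-$2d$ input, and then obtain any convex combination with a given $\mathbf{e}$ by a stochastic selector that chooses (with probability $p$ vs.\ $1-p$) whether $g$ reads the same half as $f$ or the opposite half. The main simplification you are missing is that Definition~\ref{def:1} explicitly allows \emph{stochastic} functions, so the paper simply lets $g^{(p)}_{2d}$ flip a private $p$-biased coin---there is no need to extract a selector bit from the source or worry about determinism, and conditions iv)--vi) are automatic because each realization of the stochastic function is itself an indicator-type function.
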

\begin{proof}
 Assume that $\mathcal{E}(P_{X,Y},Q_U,Q_V)\neq \phi$. 
 Let $\mathbf{e}^0= (e^0_{u,v})_{u\in \mathcal{U}_\phi, v\in \mathcal{V}_\phi}$, where $e^0_{u,v}= 2Q_U(u)Q_V(v)-1, \forall u,v\in \mathcal{U}_\phi\times \mathcal{V}_\phi$. The point  $\mathbf{e}^0$ is in $\mathcal{E}(P_{X,Y},Q_U,Q_V)$ as explained in the following. Let $f_d:\{-1,1\}^d\to \mathcal{U}$ be a sequence of functions such that for $U_d=f_d(X^n)$ its distribution $Q_{d,U}$ converges to $Q_U$ in variational distance as $d\to\infty$. Such a sequence exists since  $\mathcal{E}(P_{X,Y},Q_U,Q_V)\neq \phi$.
 Similarly, let  $g_d:\{-1,1\}^d\to \mathcal{V}$ be a sequence of functions such that for $V_d=g_d(Y^n)$ its distribution $Q_{d,V}$ converges to $Q_V$ in variational distance as $d\to\infty$. Define $f'_{2d}(x^{2d})\triangleq f_d(x^d), x^{2d}\in \{-1,1\}^d$, so that $f'_{2d}$ depends only on the first half of $x^{2d}$. Similarly, define $g'_{2d}(y^{2d})\triangleq g_d(y_{d+1}^{2d}), y^{2d}\in \{-1,1\}^d$, so that $g'_{2d}$ depends only on the second half of $y^{2d}$. Then, $\lim_{d\to\infty} \mathbb{E}(f'_{2d,u}(X^d)g'_{2d,v}(Y^d))=2Q_U(u)Q_V(v)-1$\footnote{One can define the odd-indexed functions $f'_{2d+1}$ and $g'_{2d+1}$ to be equal to the even-indexed $f'_{2d}$ and $g'_{2d}$, respectively, so that the limit is well-defined.} and $(f'_{d},g'_{d})_{d\in \mathbb{N}}\in \mathcal{F}(Q_U,Q_V)$, so $\mathbf{e}^0\in\mathcal{E}(P_{X,Y},Q_U,Q_V)$. 

 We will show that for any point $\mathbf{e}$ in  $\mathcal{E}(P_{X,Y},Q_U,Q_V)$, the line connecting $\mathbf{e}$ to $\mathbf{e}_0$ lies inside  $\mathcal{E}(P_{X,Y},Q_U,Q_V)$, and hence the set is star-convex. To see this, let $(f_d, g_d)$ be the associated sequence of functions for $\mathbf{e}$. Define $f'_{2d}(x^{2d})\triangleq f_d(x^d), x^{2d}\in \{-1,1\}^d$ as in the previous step, and define
 define the following stochastic functions:
 \begin{align*}
     g^{(p)}_{2d}(y^{2d}) = \begin{cases}
     g_d(y_{d+1}^{2d}) \quad & \text{with probability } p,\\
     g_d(y^d) & \text{with probability } 1-p,
     \end{cases}, p\in [0,1].
 \end{align*}
 Then, $\lim_{d\to\infty} \mathbb{E}(f'_{2d,u}(X^d)g^{(p)}_{2d,v}(Y^d))=
 pe^0_{u,v}+(1-p)e_{u,v}, u,v\in \mathcal{U}_{\phi}\times\mathcal{V}_{\phi}$. So, the point $p\mathbf{e}^0+(1-p)\mathbf{e}$ is in $\mathcal{E}(P_{X,Y},Q_U,Q_V)$. This completes the proof.
\end{proof}

The following Lemma leverages the star-convexity of $\mathcal{E}(P_{X,Y},Q_U,Q_V)$ to provide inner-bounds, i.e. to characterize subsets of $\mathcal{E}(P_{X,Y},Q_U,Q_V)$.
\begin{Lemma}
\label{prop:5}
Given distributions $Q_U\in \Delta_{\mathcal{U}}$ and $Q_V\in \Delta_\mathcal{V}$, let $(f_d(\cdot),g_d(\cdot))_{d\in \mathbb{N}}\in \mathcal{F}(Q_U,Q_V)$. Let $Q_{U,V}$ be the distribution generated by $(f_d(\cdot),g_d(\cdot))_{d\in \mathbb{N}}$. 
For any $\mathcal{A}\subseteq \mathcal{U}_{\phi}$ and $\mathcal{B}\subseteq \mathcal{V}_{\phi}$ define $\mathbf{s}_{\mathcal{A},\mathcal{B}}= (s_{\mathcal{A},\mathcal{B},u,v}: u\in \mathcal{U}_{\phi}, v\in \mathcal{V}_{\phi})$, where
\begin{align}
\label{eq:6}
 s_{\mathcal{A},\mathcal{B},u,v}= 
 \begin{cases}
 2Q_{U,V}(u,v)-1 \qquad & \text{ if } u\in \mathcal{A}_{\phi}, v\in \mathcal{B}_{\phi}\\
2Q_U(u)Q_V(v)-1 & \text{otherwise.}
 \end{cases}
\end{align}
Then, the convex polytope $\mathcal{S}_{Q_{U,V}}$ generated by the set of vertices $\{\mathbf{s}_{\mathcal{A},\mathcal{B}},\mathcal{A}\subseteq \mathcal{U}_{\phi}, \mathcal{B}\subseteq \mathcal{V}_{\phi}\}$ is a subset of $\mathcal{E}(P_{X,Y},Q_U,Q_V)$. 
\end{Lemma}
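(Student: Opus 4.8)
My plan is to prove the inclusion in two stages: first show that every extreme point $\mathbf{s}_{\mathcal{A},\mathcal{B}}$ lies in $\mathcal{E}(P_{X,Y},Q_U,Q_V)$ by an explicit construction, and then upgrade this to the whole polytope using a convexity property of the real‑valued relaxation $\mathcal{F}(Q_U,Q_V)$ together with the star‑convexity already proved in Lemma~\ref{prop:4}.

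For a fixed $(\mathcal{A},\mathcal{B})$ I would build a sequence $(\tilde f_d,\tilde g_d)\in\mathcal{F}(Q_U,Q_V)$ on a source block partitioned into a ``correlated core'' and separate private blocks for the two terminals, reusing the disjoint‑block device from the proof of Lemma~\ref{prop:4}. On the core the terminals run the rounded versions of $f_d,g_d$, so that the joint law of their outputs, when it falls in $\mathcal{A}\times\mathcal{B}$, matches $Q_{U,V}$; as soon as the core output would leave $\mathcal{A}$ (respectively $\mathcal{B}$), the terminal instead reads its own private block and emits a fresh sample drawn so that the overall output marginals stay equal to $Q_U,Q_V$. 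Because each terminal's private block is statistically independent of every block read by the other terminal, all coordinates of $\lim_d\mathbb{E}(\tilde f_{d,u}\tilde g_{d,v})$ off the rectangle $\mathcal{A}\times\mathcal{B}$ factor into the independent (i.e.\ $\mathbf{e}^{0}$) value, while the coordinates inside $\mathcal{A}\times\mathcal{B}$ are pinned down by the correlation‑preserving identity for the rounding in \eqref{eq:3}--\eqref{eq:4}; existence of the limits is guaranteed by Lemma~\ref{prop:2}(iii),(v),(vi). The ``switch to the private block'' rule is also what keeps conditions (iv)--(vi) valid for $(\tilde f_d,\tilde g_d)$ — in particular it is what guarantees at most one $\tilde f_{d,u}$ equals $+1$ — and verifying these conditions while forcing the off‑rectangle coordinates to be exactly the product values is the most delicate part of this stage.

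For the interior of $\mathcal{S}_{Q_{U,V}}$ I would exploit the fact that conditions (iv)--(vi) defining $\mathcal{F}(Q_U,Q_V)$ are preserved under convex combinations of function tuples (this is precisely the advantage of passing to real‑valued functions): run each vertex appearing in a convex decomposition of the target point on its own independent block group, then take the convex combination with the prescribed weights of the corresponding $\tilde f$‑tuples, and the same combination of the $\tilde g$‑tuples. The marginal and structural conditions survive, and letting $d\to\infty$ the bilinear quantities $\mathbb{E}(\tilde f_{d,u}\tilde g_{d,v})$ split into ``same‑group'' terms, each contributing the corresponding vertex coordinate, and ``cross‑group'' terms that collapse to $\mathbf{e}^{0}$ by independence of the block groups. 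I expect the main obstacle to be arranging the weights so that this sweeps out \emph{every} point of $\mathcal{S}_{Q_{U,V}}$ and not merely a sub‑region: the naive scheme loads the cross‑group mass onto $\mathbf{e}^{0}$ with a quadratic rather than affine weight profile (reflecting that the two terminals cannot share randomness), so one must combine it with the star‑convexity of Lemma~\ref{prop:4} and, if necessary, iterate and renormalise, all the while keeping (iv)--(vi) verified. Controlling the limits and choosing block lengths so that the marginals of the combined functions converge to $Q_U,Q_V$ — via the even/odd‑index padding already used in Lemma~\ref{prop:4} — is routine by comparison.
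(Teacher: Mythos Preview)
The paper does not give a proof: it writes only that ``the proof follows by similar arguments as the randomization approach used in the proof of Lemma~\ref{prop:4} and is omitted for brevity.'' So your plan is already far more detailed than anything in the paper, and there is no written argument to benchmark against.

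That said, Stage~1 does not do what you claim. Your switch rule --- go to the private block once the core output leaves $\mathcal A$ (resp.\ $\mathcal B$) --- makes $\tilde g_{d,v}$, for $v\notin\mathcal B$, depend on Bob's private block \emph{and} on the indicator $\mathbbm{1}(\hat V_{\mathrm{core}}\notin\mathcal B)$, and that indicator lives on the shared core block and is correlated with $\hat U_{\mathrm{core}}$. For $u\in\mathcal A,\ v\notin\mathcal B$ your scheme gives
\[
P(\tilde U=u,\tilde V=v)\ \longrightarrow\ \Bigl(Q_U(u)-\!\!\sum_{v'\in\mathcal B}Q_{U,V}(u,v')\Bigr)\,\frac{Q_V(v)}{1-\sum_{v'\in\mathcal B}Q_V(v')},
\]
which equals $Q_U(u)Q_V(v)$ only when $Q_{U,V}$ happens to be product on that slice; so the off-rectangle coordinates are not pinned to $\mathbf e^0$ and $\mathbf s_{\mathcal A,\mathcal B}$ is not achieved. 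The alternative --- let $\tilde f_{d,u}$ for $u\notin\mathcal A$ depend solely on the private block, with no core-based trigger --- fixes the correlations but then allows one $\tilde f_{d,u}$ with $u\in\mathcal A$ (from the core) and one with $u\notin\mathcal A$ (from the private block) to equal $+1$ simultaneously, so condition~(v) of Lemma~\ref{prop:2} fails. This is exactly the tension you called ``most delicate,'' and the switch rule does not resolve it: it trades the constraint violation for a wrong correlation value. Stage~2 has an analogous unresolved obstruction: independent randomization at the two terminals produces only product weights $\alpha^A_{\mathcal A}\alpha^B_{\mathcal B}$ on the vertices, and layering star-convexity toward $\mathbf e^0$ still reaches only points whose per-coordinate weight $p_{u,v}$ factors as $a_u b_v$; a generic point of $\mathcal S_{Q_{U,V}}$ has $p_{u,v}=P_\alpha(u\in A,\,v\in B)$ for a joint law $\alpha$ that need not factor, and ``iterate and renormalise'' cannot manufacture the shared randomness that a non-product $\alpha$ would require.
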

The proof follows by similar arguments as the randomization approach used in the proof of Lemma \ref{prop:4} and is omitted for brevity.

\begin{Definition}[\textbf{Supporting Vectors}]
\label{def:3}
For an NISS problem characterized by $(P_{X,Y},\mathcal{U}, \mathcal{V})$, consider distributions $Q_U\in \Delta_{\mathcal{U}}, Q_V\in \Delta_{\mathcal{V}}$, and $\pmb{\lambda}=(\lambda_{u,v}, u\in \mathcal{U}_\phi, \mathcal{V}_\phi)$, where $\lambda_{u,v}\geq 0, \forall u,v$ and $\sum_{u,v}\lambda_{u,v}=1$.
Let $\mathbf{e}^{\pmb{\lambda,+}}\triangleq(e^{\pmb{\lambda},+}_{u,v},u\in \mathcal{U}_\phi,v\in \mathcal{V}_\phi)$ and $\mathbf{e}^{\pmb{\lambda,-}}\triangleq(e^{\pmb{\lambda},-}_{u,v},u\in \mathcal{U}_\phi, v\in \mathcal{V}_\phi)$ be the extreme points of $\mathcal{E}(P_{X,Y},Q_U,Q_V)$ corresponding to the hyperplane characterized by the vector $\pmb{\lambda}$. That is:
\begin{align*}
&\textbf{e}^{\pmb{\lambda},-}\triangleq\argmin_{\mathbf{e}\in \mathcal{E}(P_{X,Y},Q_U,Q_V)} \sum_{u\in \mathcal{U}_{\phi},v\in \mathcal{V}_{\phi}}  \lambda_{u,v} e_{u,v},
\\&\textbf{e}^{\pmb{\lambda},+}\triangleq\argmax_{\mathbf{e}\in \mathcal{E}(P_{X,Y},Q_U,Q_V)} \sum_{u\in \mathcal{U}_{\phi},v\in \mathcal{V}_{\phi}}  \lambda_{u,v} e_{u,v}.
\end{align*}
The set of supporting vectors of $\mathcal{E}(P_{X,Y},Q_U,Q_V)$ is denoted by $\mathbf{E}\triangleq\cup_{\pmb{\lambda}} \{\textbf{e}^{\pmb{\lambda},+}, \textbf{e}^{\pmb{\lambda},-}\}$. We define $Q_{U,V}^{\pmb{\lambda},+}\triangleq \Xi^{-1}(\textbf{e}^{\pmb{\lambda},+})$ and 
$Q_{U,V}^{\pmb{\lambda},-}\triangleq \Xi^{-1}(\textbf{e}^{\pmb{\lambda},-})$, where $\Xi$ is defined in the proof of Lemma \ref{prop:3}.
\end{Definition}

The following theorem provides inner and outer bounds on the set of feasible distributions in the NISS problem. The outer bounds are further simplified into a closed-form computable expression in Theorem \ref{th:2}.
\begin{Theorem}
\label{th:1}
Given an NISS problem characterized by $(P_{X,Y},\mathcal{U}, \mathcal{V})$, the following provides inner and outer bounds on its set of feasible distributions:
\begin{align*}
    & \Xi\left(\bigcup_{Q_U\in \Delta_{\mathcal{U}},Q_V\in \Delta_{\mathcal{V}} }\bigcup_{\pmb{\lambda}}{(\mathcal{S}_{Q^{\pmb{\lambda},+}_{U,V}}\cup \mathcal{S}_{Q^{\pmb{\lambda},-}_{U,V}})}\right)
    \subseteq \mathcal{P}(P_{X,Y})\subseteq
    \\&\qquad \qquad  \Xi\left(Conv\left(\bigcup_{Q_U\in \Delta_{\mathcal{U}},Q_V\in \Delta_{\mathcal{V}} }\bigcup_{\pmb{\lambda}}{(\mathcal{S}_{Q^{\pmb{\lambda},+}_{U,V}}\cup \mathcal{S}_{Q^{\pmb{\lambda},-}_{U,V}})}\right)\right),
\end{align*}
where $Q^{\pmb{\lambda},+}_{U,V}$ and $Q^{\pmb{\lambda},+}_{U,V}$ are defined in Definition \ref{def:3} and $\mathcal{S}_{Q_{U,V}}$ is defined in Lemma \ref{prop:5}.\footnote{Note that $Q^{\pmb{\lambda},+}_{U,V}$ and $Q^{\pmb{\lambda},-}_{U,V}$ are defined as a function of $\pmb{\lambda}$, $Q_U$ and $Q_V$, but the dependence on $Q_U$ and $Q_V$ is not made explicit to simplify the notation.}
\end{Theorem}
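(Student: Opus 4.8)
The plan is to assemble Theorem~\ref{th:1} from the pieces already established: the bijection $\Xi$ between $\mathcal{P}(P_{X,Y},Q_U,Q_V)$ and $\mathcal{E}(P_{X,Y},Q_U,Q_V)$ from Lemma~\ref{prop:3}, the star-convexity of $\mathcal{E}(P_{X,Y},Q_U,Q_V)$ from Lemma~\ref{prop:4}, the polytope inner bound $\mathcal{S}_{Q_{U,V}}\subseteq\mathcal{E}(P_{X,Y},Q_U,Q_V)$ from Lemma~\ref{prop:5}, and the supporting-vector machinery of Definition~\ref{def:3}. Since $\mathcal{P}(P_{X,Y})=\bigcup_{Q_U,Q_V}\mathcal{P}(P_{X,Y},Q_U,Q_V)$ and $\Xi$ acts as the bijection on each block, it suffices to prove, for each fixed pair $(Q_U,Q_V)$, the sandwich
\[
\bigcup_{\pmb{\lambda}}\bigl(\mathcal{S}_{Q^{\pmb{\lambda},+}_{U,V}}\cup \mathcal{S}_{Q^{\pmb{\lambda},-}_{U,V}}\bigr)\ \subseteq\ \mathcal{E}(P_{X,Y},Q_U,Q_V)\ \subseteq\ Conv\Bigl(\bigcup_{\pmb{\lambda}}\bigl(\mathcal{S}_{Q^{\pmb{\lambda},+}_{U,V}}\cup \mathcal{S}_{Q^{\pmb{\lambda},-}_{U,V}}\bigr)\Bigr),
\]
and then push everything through $\Xi$ and take the union over $(Q_U,Q_V)$; the outer convex hull in the statement is needed precisely because $\Xi$ of a union of blocks need not itself be convex, whereas $Conv$ of the union of the per-block hulls is a legitimate outer bound on $\mathcal{P}(P_{X,Y})$.

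For the inner inclusion I would argue as follows. Fix $\pmb{\lambda}$. By Definition~\ref{def:3}, the supporting vectors $\mathbf{e}^{\pmb{\lambda},\pm}$ lie in $\mathcal{E}(P_{X,Y},Q_U,Q_V)$, and $Q^{\pmb{\lambda},\pm}_{U,V}=\Xi^{-1}(\mathbf{e}^{\pmb{\lambda},\pm})$ are the corresponding feasible distributions, with associated sequences of functions in $\mathcal{F}(Q_U,Q_V)$. Applying Lemma~\ref{prop:5} to those sequences shows $\mathcal{S}_{Q^{\pmb{\lambda},\pm}_{U,V}}\subseteq\mathcal{E}(P_{X,Y},Q_U,Q_V)$; taking the union over $\pmb{\lambda}$ gives the left-hand inclusion, and $\Xi$ maps it into $\mathcal{P}(P_{X,Y},Q_U,Q_V)$. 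The only subtlety is checking that the vertices $\mathbf{s}_{\mathcal{A},\mathcal{B}}$ of $\mathcal{S}_{Q_{U,V}}$ are genuinely produced by sequences in $\mathcal{F}(Q_U,Q_V)$ — this is exactly the content of Lemma~\ref{prop:5}, obtained by the independent-coordinates randomization trick from the proof of Lemma~\ref{prop:4}, so it can be invoked directly.

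For the outer inclusion I would use star-convexity together with the support-function characterization of convex sets. A star-convex set is in general much larger than the convex hull of its extreme (supporting) points, so we cannot simply write $\mathcal{E}(P_{X,Y},Q_U,Q_V)\subseteq Conv(\mathbf{E})$ with $\mathbf{E}$ the supporting vectors alone; instead we exploit that $\mathcal{E}(P_{X,Y},Q_U,Q_V)$ is star-convex about $\mathbf{e}^0$ and that each polytope $\mathcal{S}_{Q^{\pmb{\lambda},\pm}_{U,V}}$ itself contains $\mathbf{e}^0$ (since one of its defining vertices, the one for $\mathcal{A}=\mathcal{B}=\emptyset$, equals $\mathbf{e}^0$). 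Thus $\mathcal{E}(P_{X,Y},Q_U,Q_V)=\bigcup_{\mathbf{e}}[\mathbf{e}^0,\mathbf{e}]$ where $\mathbf{e}$ ranges over points of $\mathcal{E}$, and it is enough to show every such $\mathbf{e}$ lies in $Conv(\bigcup_{\pmb\lambda}(\mathcal{S}_{Q^{\pmb\lambda,+}_{U,V}}\cup\mathcal{S}_{Q^{\pmb\lambda,-}_{U,V}}))$: for any direction $\pmb{\lambda}$ (and its negation), the inner product $\sum_{u,v}\lambda_{u,v}e_{u,v}$ is by definition sandwiched between the values at $\mathbf{e}^{\pmb{\lambda},-}$ and $\mathbf{e}^{\pmb{\lambda},+}$, hence $\mathbf{e}$ respects every supporting halfspace of the convex body $Conv(\mathbf{E})$, giving $\mathcal{E}(P_{X,Y},Q_U,Q_V)\subseteq Conv(\mathbf{E})\subseteq Conv(\bigcup_{\pmb\lambda}(\mathcal{S}_{Q^{\pmb\lambda,+}_{U,V}}\cup\mathcal{S}_{Q^{\pmb\lambda,-}_{U,V}}))$, where the last step uses $\mathbf{e}^{\pmb\lambda,\pm}\in\mathcal{S}_{Q^{\pmb\lambda,\pm}_{U,V}}$ (the vertex $\mathcal{A}=\mathcal{U}_\phi,\mathcal{B}=\mathcal{V}_\phi$). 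Applying $\Xi$ to both sides and unioning over $(Q_U,Q_V)$, then taking $Conv$ on the outer side to account for the non-convex glueing of blocks, yields the claimed chain.

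\textbf{Main obstacle.} The delicate point is the outer bound: one must be careful that $Conv(\mathbf{E})$, built only from the finitely-many (per $\pmb\lambda$) supporting vectors, really does contain all of the star-convex set $\mathcal{E}(P_{X,Y},Q_U,Q_V)$ — this requires that $\mathcal{E}$ actually attains its support-function values at these extreme points (compactness of $\mathcal{E}$, which should follow from items iii)-vi) of Lemma~\ref{prop:2} and a diagonal/subsequence argument on the Fourier coefficients) and, just as importantly, that the convex body so obtained is not enlarged when we replace each $\mathbf{e}^{\pmb\lambda,\pm}$ by the bigger polytope $\mathcal{S}_{Q^{\pmb\lambda,\pm}_{U,V}}$ containing it — which is automatic — while still giving a valid \emph{outer} bound after applying the nonlinear map $\Xi$ and merging the $(Q_U,Q_V)$-blocks, hence the outermost $Conv$ in the theorem statement. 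Verifying that $\Xi$ is compatible with these convexity manipulations (it is affine on each block, by the explicit form in the proof of Lemma~\ref{prop:3}) is the bookkeeping that ties the argument together.
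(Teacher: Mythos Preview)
Your proposal is correct and follows essentially the same route as the paper: the inner bound comes directly from Lemma~\ref{prop:5} applied to the distributions $Q^{\pmb{\lambda},\pm}_{U,V}$ of Definition~\ref{def:3}, and the outer bound from the observation that $\mathcal{E}(P_{X,Y},Q_U,Q_V)$ is contained in its convex hull, which in turn is governed by the supporting vectors. Your write-up is considerably more detailed than the paper's two-sentence sketch---in particular you spell out the support-function/half-space argument, the role of the center $\mathbf{e}^0$, and the compactness/attainment caveat---whereas the paper simply cites Lemma~\ref{prop:2} and the ``any set is a subset of its convex hull'' fact without elaborating.
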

The proof of the inner-bound follows from Lemma \ref{prop:5} and Definition \ref{def:3}. The proof of the outer-bound follows from Lemma \ref{prop:2} and the fact that any set is a subset of its convex hull. 

\section{A Computable Outer Bound}
Theorem \ref{th:1} implies that in order to derive outer bounds on the set of feasible distributions in the NISS problem, it suffices to characterize the supporting hyperplanes of $\mathcal{E}(P_{X,Y},Q_U,Q_V)$, since the convex hull can be written as the  polytope characterized by the intersection of the closed half-spaces created by the supporting hyperplanes (e.g. \cite{boyd2004convex}).
Let us define
\begin{align}
\label{eq:7}
&t^{\pmb{\lambda},-}=\min_{\mathbf{e}\in \mathcal{E}(P_{X,Y},Q_U,Q_V)} \sum_{u\in \mathcal{U}_{\phi},v\in \mathcal{V}_{\phi}}  \lambda_{u,v} e_{u,v},
\\&{t}^{\pmb{\lambda},+}=\max_{\mathbf{e}\in \mathcal{E}(P_{X,Y},Q_U,Q_V)} \sum_{u\in \mathcal{U}_{\phi},v\in \mathcal{V}_{\phi}}  \lambda_{u,v} e_{u,v}.
\label{eq:8}
\end{align}
The following theorem provides bounds on ${t}^{\pmb{\lambda},-},{t}^{\pmb{\lambda},+}$ which in turn provide inner and outer bounds on the set of feasible distributions.  
\begin{Theorem}
\label{th:2}
For the NISS problem characterized by $(P_{X,Y},\mathcal{U},\mathcal{V})$. The following holds:
\[
    \theta^-\leq t^{\pmb{\lambda},-}\leq  t^{\pmb{\lambda},+}\leq \theta^+,\]
where 
\begin{align*}
    &\theta^+\triangleq \theta_{\phi} +2\rho \theta_\rho +\frac{1}{2}\rho^2(-\theta_{\rho}+\theta_{\rho^2,1}+\theta_{\rho^2,3})
    \\&\theta^-\triangleq \theta_\phi-2\rho  \theta_{\rho}-
    \frac{1}{2}\rho^2\Big( \theta_{\rho^2,2}+\theta_{\rho^2,3})
\\
& \theta_\phi\triangleq \sum_{u\in \mathcal{U}_{\phi},v\in \mathcal{V}_{\phi}}  \lambda_{u,v}(2Q_U(u)-1)(2Q_V(v)-1)
\\&
\theta_\rho\triangleq  \sum_{u\in \mathcal{U}_\phi, v\in \mathcal{V}_\phi}\lambda_{u,v}\sqrt{Q_U(u)Q_V(v)}
\\&\theta_{\rho^2,1}\triangleq \sum_{u\in \mathcal{U}_{\phi},v\in \mathcal{V}_{\phi}}\!\!\!\!\! \! \lambda_{u,v}Q_U(u)(1-Q_V(v))\\
&\theta_{\rho^2,2}\triangleq \sum_{u\in \mathcal{U}_{\phi},v\in \mathcal{V}_{\phi}}\!\!\!\!\! \! \lambda_{u,v}Q_U(u)Q_V(v)\\
&\theta_{\rho^2,3}\triangleq\!\!\sqrt{\sum_{u\in \mathcal{U}_{\phi},v\in \mathcal{V}_{\phi}}\!\!\!\!\! \! \lambda_{u,v}Q_U(u)(1\!-\!Q_U(u))
\!\!\!\!\!\!\sum_{u\in \mathcal{U}_{\phi},v\in \mathcal{V}_{\phi}} \!\!\!\!\!\! \lambda_{u,v}Q_V(v)(1\!-\!Q_V(v))})
.
\end{align*}

\end{Theorem}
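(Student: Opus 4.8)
By the definitions \eqref{eq:7}--\eqref{eq:8} it suffices to prove that $\theta^-\le\sum_{u\in\mathcal{U}_\phi,v\in\mathcal{V}_\phi}\lambda_{u,v}e_{u,v}\le\theta^+$ for every $\mathbf{e}\in\mathcal{E}(P_{X,Y},Q_U,Q_V)$ (then $t^{\pmb{\lambda},-}\le t^{\pmb{\lambda},+}$ is trivial when $\mathcal{E}(P_{X,Y},Q_U,Q_V)\neq\phi$, and the statement is vacuous otherwise). Fix such an $\mathbf{e}$ and an associated real-valued sequence $(f_d,g_d)_{d\in\NN}\in\mathcal{F}_{X,Y}(Q_U,Q_V)$, set $f_{d,u}=\rchi(f_d=u)$ and $g_{d,v}=\rchi(g_d=v)$, and assume $\rho\ge 0$ (the relevant regime; otherwise replace $\rho$ by $|\rho|$ and interchange the two bounds). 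It is convenient to pass to $h_{d,u}\triangleq\tfrac12(1+f_{d,u})$ and $k_{d,v}\triangleq\tfrac12(1+g_{d,v})$, which by Lemma \ref{prop:2} are $[0,1]$-valued, satisfy $\sum_{u\in\mathcal{A}}h_{d,u}\le 1$ for every $\mathcal{A}\subseteq\mathcal{U}_\phi$ (and the $Y$-side analogues), and have $\EE[h_{d,u}]\to Q_U(u)$, $\EE[k_{d,v}]\to Q_V(v)$.

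Expanding $\EE(f_{d,u}(X^d)g_{d,v}(Y^d))$ by Fourier degree via \eqref{eq:1}, multiplying by $\lambda_{u,v}\ge 0$, summing over $(u,v)$ and letting $d\to\infty$ (the limits exist by Lemma \ref{prop:2}), the degree-$0$ part yields $\theta_\phi$. For the degree-$1$ part I bound $\sum_i\widehat{h_{d,u}}(\{i\})\widehat{k_{d,v}}(\{i\})$ both by plain Cauchy--Schwarz, giving $\sqrt{w_{d,u}\,w'_{d,v}}$ with $w_{d,u}\triangleq\sum_i\widehat{h_{d,u}}(\{i\})^2\le\EE[h_{d,u}]-\EE[h_{d,u}]^2\to Q_U(u)(1-Q_U(u))$ (using $h_{d,u}^2\le h_{d,u}$) and $w'_{d,v}$ analogous, and by a Cauchy--Schwarz weighted by the non-negative $h_{d,u}$: writing $L_{d,v}\triangleq\sum_i\widehat{k_{d,v}}(\{i\})\phi_{\{i\}}$ for the degree-$1$ part of $k_{d,v}$ (viewed on the $X$-cube), $\sum_i\widehat{h_{d,u}}(\{i\})\widehat{k_{d,v}}(\{i\})=\EE[h_{d,u}L_{d,v}]\le\sqrt{\EE[h_{d,u}]}\,\sqrt{\EE[h_{d,u}L_{d,v}^2]}\le\sqrt{Q_U(u)\,w'_{d,v}}$ (using $h_{d,u}\le 1$). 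For the degree-$\ge2$ part I use $0\le\rho^{|\mathcal{S}|}\le\rho^2$ together with Cauchy--Schwarz and $\EE[f_{d,u}^2]\le 1$ (Lemma \ref{prop:2} iv)), which caps the level-$\ge2$ Fourier weight of $f_{d,u}$ by the residual budget $4\big(Q_U(u)(1-Q_U(u))-w_{d,u}\big)$ in the limit, and likewise for $g_{d,v}$.

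These bounds leave free, for each $u$ (resp. $v$), the split of the non-constant Fourier weight of $f_{d,u}$ (resp. $g_{d,v}$) between level $1$ and higher levels, and express $\sum_{u,v}\lambda_{u,v}e_{u,v}$ as $\theta_\phi$ plus $\rho$ times a level-$1$ term plus $\rho^2$ times a level-$\ge2$ term built from the $w$'s and the residual budgets. Maximizing (resp. minimizing) over the splits --- using $\rho\ge\rho^2$, the budget constraints, and the sub-stochasticity from Lemma \ref{prop:2} v)--vi) to tighten the level-$1$ estimates --- and then applying one final Cauchy--Schwarz over $(u,v)$ against $\lambda_{u,v}$ (which converts $\sum_{u,v}\lambda_{u,v}\sqrt{Q_U(u)(1-Q_U(u))}\,\sqrt{Q_V(v)(1-Q_V(v))}$ into $\theta_{\rho^2,3}$, and the level-$1$ sums into the $\theta_\rho,\theta_{\rho^2,1},\theta_{\rho^2,2}$ terms) produces exactly $\theta^+$ on the max side and $\theta^-$ on the min side; the asymmetry between them arises because in the minimization the level-$1$ and level-$\ge2$ contributions cannot be driven negative at once, whereas in the maximization pushing mass toward level $1$ (where the $h$-weighted Cauchy--Schwarz applies) is favorable and leaves the residual $\tfrac12\rho^2(-\theta_\rho+\theta_{\rho^2,1}+\theta_{\rho^2,3})$. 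Passing to the limit $d\to\infty$ completes the argument.

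I expect the main obstacle to be exactly this optimization: the crude variance bound $w_{d,u}\le\var(h_{d,u})$ is too weak in some regimes --- it does not see that a $[0,1]$-valued function of small mean cannot carry much Fourier weight at level $1$ --- so the sub-stochasticity constraints of Lemma \ref{prop:2} must be used to sharpen the level-$1$ estimates, and one must then verify that the joint optimum over all splits, combined with the $(u,v)$ Cauchy--Schwarz step, collapses precisely onto the stated closed forms (with the correct coefficients $2\rho,\tfrac12\rho^2$ and the precise asymmetry between $\theta^+$ and $\theta^-$). Reducing to $\rho\ge 0$ and interchanging $\lim_{d\to\infty}$ with the finite sums over $(u,v)$ are routine given Lemma \ref{prop:2}.
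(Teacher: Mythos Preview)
Your high-level plan (Fourier decomposition by degree, isolate the degree-$0$ term as $\theta_\phi$, then bound the remainder) matches the paper, but the proposal is genuinely incomplete at the two places that carry the proof, and the substitutes you suggest do not produce the stated constants.

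First, the degree-$\ge 2$ contribution. The paper does \emph{not} optimize over a ``split of Fourier weight between level $1$ and higher levels''. Instead it separates, for each $(u,v)$, the degree-$\ge 2$ coefficients into the positive part $\tau^+_{u,v}=\sum_{\mathcal{S}\in\mathcal{A}_{u,v}}\lambda_{u,v}f_{d,u,\mathcal{S}}g_{d,v,\mathcal{S}}$ and the negative part $\tau^-_{u,v}$, so that $\rho^{|\mathcal{S}|}\in[0,\rho^2]$ can be replaced by its two endpoints on the two parts. It then combines two independent inequalities: the Cauchy--Schwarz/Parseval bound on $\sum_{u,v}(\text{level-}1+\tau^+_{u,v}-\tau^-_{u,v})$, which gives $\theta_{\rho^2,3}$, and the \emph{probability non-negativity} constraints $-\lambda_{u,v}Q_U(u)Q_V(v)\le \text{level-}1+\tau^+_{u,v}+\tau^-_{u,v}\le \lambda_{u,v}Q_U(u)(1-Q_V(v))$ (these are just $P(U=u,V=v)\ge 0$ and $P(U=u,V\neq v)\ge 0$ in the limit), which supply $\theta_{\rho^2,1}$ and $\theta_{\rho^2,2}$. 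Adding/subtracting and halving gives the bounds on $\sum\tau^\pm$ that yield the exact coefficients in $\theta^\pm$. Your plan invokes the sub-stochasticity constraints of Lemma~\ref{prop:2}~v)--vi) in place of probability non-negativity; those constraints act across $u$'s (resp.\ $v$'s) and do not give the pointwise $(u,v)$ bounds that produce $\theta_{\rho^2,1},\theta_{\rho^2,2}$, so the optimization you describe will not close to the stated closed form.

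Second, the level-$1$ bound. The paper uses the sharp inequality $|\sum_{|\mathcal{S}|=1}f_{d,u,\mathcal{S}}g_{d,v,\mathcal{S}}|\le 2\sqrt{Q_U(u)Q_V(v)}$ (cited from \cite{yu2022common}), which is what makes the coefficient on $\theta_\rho$ equal to $2\rho$. Your $h$-weighted Cauchy--Schwarz yields $|\sum_i\widehat{h}_{d,u}(\{i\})\widehat{k}_{d,v}(\{i\})|\le\sqrt{Q_U(u)\,w'_{d,v}}$, i.e.\ in $(f,g)$-normalization at most $4\sqrt{Q_U(u)\,Q_V(v)(1-Q_V(v))}$, which exceeds $2\sqrt{Q_U(u)Q_V(v)}$ whenever $Q_V(v)<3/4$; so this step as written does not deliver $\theta_\rho$. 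The paper's route avoids both of your flagged ``obstacles'' by importing that inequality directly and by the $\tau^\pm$ device rather than an unconstrained optimization.
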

\begin{Remark}
Note that if $|\mathcal{U}|=|\mathcal{V}|=2$, then $\mathcal{U}_\phi$ and $\mathcal{V}_\phi$ have one element, and $\lambda_{1,1}=1$. This recovers the bound given in \cite[Theorem 1]{yu2022common} for binary output NISS. 
\end{Remark}
\begin{proof}
Please refer to Appendix \ref{App:Th:2}.
\end{proof}

\section{Conclusion}
The NISS  problem was considered.
Inner and outer bounds were obtained on the set of distributions $Q_{U,V}$ which can be produced given an input distribution $P_{X,Y}$. The derivation and proof techniques were based on discrete Fourier analysis along with a novel randomized rounding technique. The bounds are applicable in NISS scenarios where the output alphabets have arbitrary finite size. A future avenue of research is to extend the derivations to general ergodic distributed sources.
\begin{appendices}
    \section{Proof of Lemma \ref{prop:3}}
    \label{App:Prop:3}
    By the Cantor-Bernstein Theorem (e.g. \cite{hinkis2013proofs}), it suffices to show that there exist injective functions from each set to the other. To show the existence of an injective function from $\mathcal{P}(P_{X,Y},Q_U,Q_V)$ to $\mathcal{E}(P_{X,Y},Q_U,Q_V)$, let us take $Q_{U,V}\in \mathcal{P}(P_{X,Y},Q_U,Q_V)$, and let $(f_d,g_d),d\in \mathbb{N}$ be its associated sequence of functions. Then, by Lemma \ref{prop:2}, we have 
$(f_d,g_d)_{d\in \mathbb{N}}\in \mathcal{F}_{X,Y}(Q_U,Q_V)$. Furthermore, from Equation \eqref{eq:2}, for any $u,v\in \mathcal{U}_\phi\times \mathcal{V}_\phi$ we have:
\begin{align}
&\lim_{d\to \infty} \mathbb{E}(f_{d,u}(X^d)g_{d,v}(Y^d)) = 2 \lim_{d\to \infty}P(f_{d,u}(X^d)= g_{d,v}(Y^d))-1
=2Q_{U,V}(U=u, V= v)-1\triangleq q_{u,v}, 
\label{eq:5}
\end{align}
where we have used the fact that from Lemma \ref{prop:2}, $Q_{U,V}(u,v)= \lim_{d\to \infty} Q_{d,U,V}$ exists. 
We define the mapping $\Psi:\mathcal{P}(P_{X,Y},Q_U,Q_V)\to \mathcal{E}(P_{X,Y},Q_U,Q_V)$ such that $\Psi: Q_{U,V}\mapsto (q_{u,v})_{u\in \mathcal{U}_\phi,v\in \mathcal{V}_\phi}$. The mapping $\Psi$ is injective. To see this assume that $\Psi(Q_{U,V})=\Psi(Q'_{U,V})$ for $Q_{U,V},Q'_{U,V}\in \mathcal{P}(P_{X,Y},Q_U,Q_V)$. Then, $Q_{U,V}$ and $Q'_{U,V}$ must have the same marginals $Q_U$ and $Q_V$ and should both satisfy Equation \eqref{eq:5}. These two conditions provide a system of $|\mathcal{U}||\mathcal{V}|$ linearly independent equalities. Since $Q_{U,V}$ is a solution for this system, it is the unique solution. Hence, we must have $Q_{U,V}=Q'_{U,V}$. 

Next, we construct an injective mapping in the reverse direction.
Let $\Xi:\mathcal{E}(P_{X,Y},Q_U,Q_V) \to  \mathcal{P}(P_{X,Y},Q_U,Q_V)$ be such that $\Xi: (e_{u,v})_{u\in \mathcal{U}_\phi, v\in \mathcal{V}_\phi}\mapsto Q_{U,V}$ with $Q_{U,V}$ being the distribution generated by the associated sequence of function of $(e_{u,v})_{u\in \mathcal{U}_\phi, v\in \mathcal{V}_\phi}$. Then, it is straightforward to see that $\Xi$ is an injective mapping. This completes the proof.
\section{Proof of Theorem \ref{th:2}}
\label{App:Th:2}
Note that 
\begin{align*}
 &\sum_{u\in \mathcal{U}_{\phi},v\in \mathcal{V}_{\phi}}  \lambda_{u,v} \mathbb{E}(f_{d,u}(X^d)g_{d,v}(Y^d))
=\sum_{u\in \mathcal{U}_{\phi},v\in \mathcal{V}_{\phi}} \sum_{\mathcal{S}\subseteq [d]} \lambda_{u,v} f_{d,u,\mathcal{S}}g_{d,v,\mathcal{S}}\rho^{|\mathcal{S}|}
 \\&=\sum_{u\in \mathcal{U}_{\phi},v\in \mathcal{V}_{\phi}}  \lambda_{u,v} f_{d,u,\phi}g_{d,v,\phi}+
 \sum_{u\in \mathcal{U}_{\phi},v\in \mathcal{V}_{\phi}} \sum_{\mathcal{S}:|\mathcal{S}|\geq 1} \lambda_{u,v} f_{d,u,\mathcal{S}}g_{d,v,\mathcal{S}}\rho^{|\mathcal{S}|}
\end{align*}

Note that $f_{d,u,\phi}= 2Q_U(u)-1$ and $g_{d,v,\phi}= 2Q_V(v)-1$. So, 
\begin{align*}
   & \sum_{u\in \mathcal{U}_{\phi},v\in \mathcal{V}_{\phi}}  \lambda_{u,v} f_{d,u,\phi}g_{d,v,\phi}
    =  \sum_{u\in \mathcal{U}_{\phi},v\in \mathcal{V}_{\phi}}  \lambda_{u,v}(2Q_U(u)-1)(2Q_V(v)-1).
\end{align*}
To bound the second term, we define the sets $\mathcal{A}_{u,v}\triangleq\{\mathcal{S}\subseteq [d]: |\mathcal{S}|\geq 2, f_{d,u,\mathcal{S}}g_{d,v,\mathcal{S}}\geq 0\}$ and  $\mathcal{B}_{u,v}\triangleq\{\mathcal{S}\subseteq [d]: |\mathcal{S}|\geq 2, f_{d,u,\mathcal{S}}g_{d,v,\mathcal{S}}< 0\}$. Furthermore, we define:
\begin{align*}
&\tau^+_{u,v}\triangleq \sum_{\mathcal{S}\in \mathcal{A}} \lambda_{u,v} f_{d,u,\mathcal{S}}g_{d,v,\mathcal{S}}, \quad u\in \mathcal{U}_\phi, v\in \mathcal{V}_{\phi}
\\&\tau^-_{u,v}\triangleq \sum_{\mathcal{S}\in \mathcal{B}} \lambda_{u,v} f_{d,u,\mathcal{S}}g_{d,v,\mathcal{S}},\quad u\in \mathcal{U}_\phi, v\in \mathcal{V}_{\phi}.
\end{align*}
Note that: 
\begin{align*}
  &\sum_{u\in \mathcal{U}_{\phi},v\in \mathcal{V}_{\phi}}  \lambda_{u,v} (\rho \sum_{\mathcal{S}:|\mathcal{S}|= 1} f_{d,u,\mathcal{S}}g_{d,v,\mathcal{S}}+ \rho^2 \tau^-_{u,v})\leq 
\sum_{u\in \mathcal{U}_{\phi},v\in \mathcal{V}_{\phi}} \sum_{\mathcal{S}:|\mathcal{S}|\geq 1} \lambda_{u,v} f_{d,u,\mathcal{S}}g_{d,v,\mathcal{S}}\rho^{|\mathcal{S}|}\leq 
 \\&
  \sum_{u\in \mathcal{U}_{\phi},v\in \mathcal{V}_{\phi}}  \lambda_{u,v} (\rho \sum_{\mathcal{S}:|\mathcal{S}|= 1} f_{d,u,\mathcal{S}}g_{d,v,\mathcal{S}}+ \rho^2 \tau^+_{u,v})
\end{align*}
Using \cite[Equation (48)]{yu2022common}, for all $ u\in \mathcal{U}, v\in \mathcal{V}$, we get:
\begin{align}
|\!\!\sum_{\mathcal{S}:|\mathcal{S}|= 1} \!\lambda_{u,v} f_{d,u,\mathcal{S}}g_{d,v,\mathcal{S}}|\leq 
 2 \lambda_{u,v} \sqrt{Q_U(u)Q_V(v)},
 \label{eq:C}
\end{align}
{Therefore,}
\begin{align*}
&\sum_{u\in \mathcal{U}_{\phi},v\in \mathcal{V}_{\phi}} \sum_{\mathcal{S}:|\mathcal{S}|= 1} \lambda_{u,v} f_{d,u,\mathcal{S}}g_{d,v,\mathcal{S}}\leq 
 2\sum_{u\in \mathcal{U}_{\phi},v\in \mathcal{V}_{\phi}} \lambda_{u,v}\sqrt{Q_U(u)Q_V(v)},
\end{align*}
We need to find upper bounds and lower bounds on $\tau^+_{u,v}$ and $ \tau^-_{u,v}$, respectively. Using the fact that $|\rho|\leq 1$,
we have:
\begin{align*}
&\sum_{u\in \mathcal{U}_{\phi},v\in \mathcal{V}_{\phi}} (\sum_{\mathcal{S}:|\mathcal{S}|= 1} \lambda_{u,v} f_{d,u,\mathcal{S}}g_{d,v,\mathcal{S}}+\tau^+_{u,v}-\tau^-_{u,v}) 
\leq \sum_{u\in \mathcal{U}_{\phi},v\in \mathcal{V}_{\phi}} \sum_{\mathcal{S}:|\mathcal{S}|\geq  1}  |\sqrt{\lambda_{u,v}}f_{d,u,\mathcal{S}}||\sqrt{\lambda_{u,v}}g_{d,v,\mathcal{S}}|
\end{align*}
\begin{align*}
&\leq 
\sqrt{\sum_{u\in \mathcal{U}_{\phi},v\in \mathcal{V}_{\phi}} \sum_{\mathcal{S}:|\mathcal{S}|\geq  1} \lambda_{u,v}f^2_{d,u,\mathcal{S}}\sum_{u\in \mathcal{U}_{\phi},v\in \mathcal{V}_{\phi}} \sum_{\mathcal{S}:|\mathcal{S}|\geq  1} \lambda_{u,v}g^2_{d,v,\mathcal{S}}}
\leq 
\sqrt{\sum_{u\in \mathcal{U}_{\phi},v\in \mathcal{V}_{\phi}}  \lambda_{u,v}(1-f^2_{d,u,\phi})\sum_{u\in \mathcal{U}_{\phi},v\in \mathcal{V}_{\phi}}  \lambda_{u,v}(1-g^2_{d,v,\phi})}
\\\numberthis & = 
\sqrt{\sum_{u\in \mathcal{U}_{\phi},v\in \mathcal{V}_{\phi}}\!\!\!\!\! \! \lambda_{u,v}Q_U(u)(1-Q_U(u))
\!\!\!\!\!\sum_{u\in \mathcal{U}_{\phi},v\in \mathcal{V}_{\phi}} \!\!\!\!\!\! \lambda_{u,v}Q_V(v)(1-Q_V(v))}.
\label{eq:13}
\end{align*}
Furthermore, using the non-negativity of probabilities,(e.g.,\cite[Equations (23)-(26)]{yu2022common}) we have:
\begin{align}
&\nonumber
- \lambda_{u,v} Q_U(u)Q_V(v)
\leq 
 \sum_{\mathcal{S}:|\mathcal{S}|= 1} \lambda_{u,v} f_{d,u,\mathcal{S}}g_{d,v,\mathcal{S}}+\tau^+_{u,v}+\tau^-_{u,v}\leq  
 \\&
 \lambda_{u,v}  Q_U(u)(1-Q_V(v)), \quad u\in \mathcal{U}_\phi, v\in \mathcal{V}_\phi.
 \label{eq:14}
\end{align}
Consequently, from Equations \eqref{eq:13} and \eqref{eq:14}, we have:
\begin{align}
\nonumber
&   \sum_{u\in \mathcal{U}_{\phi},v\in \mathcal{V}_{\phi}} \tau^+_{u,v} \leq -\sum_{\mathcal{S}:|\mathcal{S}|= 1} \lambda_{u,v} f_{d,u,\mathcal{S}}g_{d,v,\mathcal{S}}\\
&
+ \frac{1}{2}\sum_{u\in \mathcal{U}_{\phi},v\in \mathcal{V}_{\phi}}\!\!\!\!\! \! \lambda_{u,v}Q_U(u)(1-Q_V(v))
   \label{eq:tplus}+\frac{1}{2}\sqrt{\sum_{u\in \mathcal{U}_{\phi},v\in \mathcal{V}_{\phi}}\!\!\!\!\! \! \lambda_{u,v}Q_U(u)(1-Q_U(u))
\!\!\!\!\!\sum_{u\in \mathcal{U}_{\phi},v\in \mathcal{V}_{\phi}} \!\!\!\!\!\! \lambda_{u,v}Q_V(v)(1-Q_V(v))}
\end{align} 
and 
\begin{align}
&   \sum_{u\in \mathcal{U}_{\phi},v\in \mathcal{V}_{\phi}} \tau^-_{u,v}\geq 
-\frac{1}{2}\Big(\sum_{u\in \mathcal{U}_{\phi},v\in \mathcal{V}_{\phi}}\!\!\!\!\! \! \lambda_{u,v}Q_U(u)Q_V(v)+
\label{eq:tminus}
\sqrt{\sum_{u\in \mathcal{U}_{\phi},v\in \mathcal{V}_{\phi}}\!\!\!\!\! \! \lambda_{u,v}Q_U(u)(1-Q_U(u))
\!\!\!\!\!\sum_{u\in \mathcal{U}_{\phi},v\in \mathcal{V}_{\phi}} \!\!\!\!\!\! \lambda_{u,v}Q_V(v)(1-Q_V(v))}\Big).
\end{align}
The proof follows by noting that:
\begin{align*}
    &t^{\pmb{\lambda},+}\leq \sum_{u\in \mathcal{U}_{\phi},v\in \mathcal{V}_{\phi}}  \lambda_{u,v}(2Q_U(u)-1)(2Q_V(v)-1)
+ \sum_{u\in \mathcal{U}_{\phi},v\in \mathcal{V}_{\phi}}  \lambda_{u,v} (\rho \sum_{\mathcal{S}:|\mathcal{S}|= 1} f_{d,u,\mathcal{S}}g_{d,v,\mathcal{S}}+ \rho^2 \tau^+_{u,v})
    \\&\leq  \theta_{\phi} +2\rho \theta_\rho +\frac{1}{2}\rho^2(-\theta_{\rho}+\theta_{\rho^2,1}+\theta_{\rho^2,3}),
\end{align*}
where we have used Equations \eqref{eq:C} and \eqref{eq:tplus} in the last inequality. The bound on $t^{\pmb{\lambda},-}$ is derived similarly.
\qed
\end{appendices}

\bibliographystyle{unsrt}

\end{document}